\documentclass[times, doublespace]{rncauth}
\usepackage{moreverb}
\usepackage[dvips,colorlinks,bookmarksopen,bookmarksnumbered,citecolor=red,urlcolor=red]{hyperref}
\newcommand\BibTeX{{\rmfamily B\kern-.05em \textsc{i\kern-.025em b}\kern-.08em
		T\kern-.1667em\lower.7ex\hbox{E}\kern-.125emX}}
\usepackage{amsmath, enumerate, amsfonts, amssymb, flushend, empheq, cite, amsthm}
\newcommand\ignore[1]{{}}
\newcommand{\iqc}{$\mathrm{\delta}$QC}
\newcommand{\hb}[1]{\hat{\bar{#1}}}
\newcommand{\rank}{\mathrm{rank}}
\newcommand{\intinf}{\int_{-\infty}^{\infty}}
\newcommand{\intinfb}{\int_{t-\beta}^{t+\beta}}

\usepackage{graphicx}

\newtheorem{theorem}{Theorem}
\newtheorem{prop}{Proposition}
\newtheorem{lemma}{Lemma}
\newtheorem{example}{Example}

\newtheorem{definition}{Definition}
\newtheorem{assumption}{Assumption}
\newtheorem{remark}{Remark}
\begin{document}
\runningheads{A. Chakrabarty, G. T. Buzzard, S. H. \.Zak, F. Zhu, A. E. Rundell}{State and Exogenous Input Reconstruction using BL-SMO}

\title{Simultaneous state and exogenous input estimation for nonlinear systems using boundary-layer sliding mode observers}

\author{Ankush Chakrabarty\affil{1}\corrauth,
	Gregery T. Buzzard\affil{2}, Stanis\l aw H. \.Zak\affil{1},\\ Fanglai Zhu\affil{3}, Ann E. Rundell\affil{4}}

\address{\affilnum{1}School of Electrical and Computer Engineering, Purdue University, West Lafayette, IN, USA\break %
\affilnum{2}Department of Mathematics, Purdue University, West Lafayette, IN, USA\break %
\affilnum{4} Weldon School of Biomedical Engineering at Purdue University, West Lafayette, IN, USA\break
\affilnum{3} College of Electronics and Information Engineering, Tongji University, Shanghai, P. R. China
}

\corraddr{465 Northwestern Avenue, School of Electrical and Computer Engineering, Purdue University, West Lafayette, IN, USA. E-mail: \texttt{chakraa@purdue.edu}. Alternate E-mail: \texttt{chak.ankush@gmail.com}}

\begin{abstract}
While sliding mode observers (SMOs) using discontinuous relays are widely analyzed, most SMOs are implemented computationally using a continuous approximation of the discontinuous relays. This approximation results in the formation of a boundary layer in a neighborhood of the sliding manifold in the observer error space. Therefore, it becomes necessary to develop methods for attenuating the effect of the boundary layer and guaranteeing performance bounds on the resulting state estimation error. In this paper, a method is proposed for constructing boundary-layer SMOs (BL-SMOs) with prescribed state estimation error bounds. The BL-SMO formulation is then extended to simultaneously estimate exogenous inputs (disturbance signals in the state and output vector fields), along with the system state. Two numerical examples are presented to illustrate the effectiveness of the proposed approach.
\end{abstract}

\keywords{Unknown input observers, low-pass filtering, incremental quadratic constraints, sliding mode, descriptor systems, multiplier matrix, linear matrix inequalities}
\maketitle

\section{Introduction}\label{sec:intro}
Estimation of system states and \textit{exogenous} inputs (disturbance inputs in the state and output vector fields) for nonlinear systems is a critical problem in many applications. These applications include: estimating actuator faults in mechanical systems, unmodeled disturbances in biomedical systems, and attacks in the measurement channels of cyberphysical systems~\cite{Chen1999, Chakrabarty2014, Pasqualetti2013}. The presence of exogenous inputs generally degrade closed-loop system performance. Therefore, it is imperative to design observers that simultaneously estimate state and exogenous inputs for implementation of high-performance closed-loop control systems.

The application of sliding modes~\cite{Utkin1977, DeCarlo1988, Kachroo1996, Rundell1998, DeCarlo2000, Utkin2009} to state and unknown input observer design has been widely developed in the context of linear systems. In~\cite{Edwards2000} and~\cite{Tan2002}, an equivalent output error injection term is proposed to recover the state and measurement disturbance signals. Linear matrix inequalities for the construction of the observer gains and the reconstruction of the state disturbances are proposed in~\cite{zak05, Kalsi2010, Kalsi2011,Witczak2015} for linear systems. An extension to Lipschitz nonlinear systems has been proposed in~\cite{Tan2003, Ha2004, Yan2007, Raoufi2010, Raoufi2010b, Teh2013, Veluvolu2011, Veluvolu2014}, and one-sided Lipschitz nonlinear systems in~\cite{Li2014}. These observers are only applicable for a limited class of nonlinear systems and furthermore, most of the above papers deal with the estimation of states and unknown inputs acting in the state vector field, in the absence of measurement disturbances.

Descriptor systems provide an attractive approach for simultaneous estimation of the states and exogenous disturbances~\cite{Darouach1995, Hou1999}. Sliding mode observer based on descriptor systems is proposed in~\cite{Gao2006}. Some recent papers~\cite{Zhu2012, Zhu2014} also discuss reconstruction of the unknown signals using second-order sliding modes. However, the classes of nonlinearities considered in the current literature are restricted to Lipschitz or quasi-Lipschitz nonlinearities, which may introduce conservativeness in the design. Additionally, the SMOs are formulated with a discontinuous injection term, but are implemented with a continuous (generally sigmoidal) injection term. This is because the system response with the discontinuous injection term is computationally taxing, and difficult to implement in practice.

We ameliorate some of these open problems in the present paper. Our \textbf{contributions} include the following: (i) we propose a method to simultaneously reconstruct the system state and exogenous inputs (both disturbances in the state and output vector fields); (ii) we provide ultimate bounds on the observer estimation error based off a tractable continuous approximation of the discontinuous relay term (this is sometimes called a `boundary-layer', see for example~\cite{Chouinard1985, Barmish1983, Corless1981, Zak_txtbk}); (iii) we extend current SMO formulations to a wider class of nonlinearities using incremental multiplier matrices, proposed in~\cite{iqs_corless}; and, (iv) we demonstrate the utility of smooth window functions in recovering the exogenous inputs to within a prescribed accuracy.

The rest of the paper is organized as follows. In Section~\ref{sec:notation}, we provide our notation. In Section~\ref{sec:prob}, we define the class of nonlinear systems considered and formally state the objective of this paper. Subsequently, an observer architecture is presented and sufficient conditions are provided which, if satisfied, specify performance bounds on the observation error of the plant state and unknown output disturbance signal. In Section~\ref{sec:lpf}, we leverage smooth window functions to reconstruct the unknown state disturbance signal to within a prescribed accuracy. In Section~\ref{sec:ex}, we test our proposed observer scheme on two numerical examples, and offer conclusions in Section~\ref{sec:conc}.

\ignore{}
\section{Notation}\label{sec:notation}
We denote by $\mathbb{R}$ the set of real numbers, and $\mathbb{N}$ denotes the set of natural numbers. Let $p\in\mathbb{{N}}$. For a function $f:\mathbb{R}\mapsto\mathbb{R}$, we denote $\mathcal{C}^p$ the space of $p$-times differentiable functions. The function $f\in\mathcal L_p$ if $$\left(\intinf \|f(t)\|^p\,dt\right)^{\frac{1}{p}}< \infty$$ and $f\in\mathcal L_\infty$ if $\sup_\mathbb{R} |f| <\infty$.
For every $v\in\mathbb{R}^n$, we denote $\|v\|=\sqrt{v^\top v}$, where $v^\top$ is the transpose of $v$. The sup-norm or $\infty$-norm is defined as $\|v\|_\infty \triangleq \sup_{t\in\mathbb{R}}\|v(t)\|$. We denote by $\lambda_{\min}(P)$ the smallest eigenvalue of a square matrix $P$. The symbol $\succ(\prec)$ indicates positive (negative) definiteness and $A\succ B$ implies $A-B\succ 0$ for $A,B$ of appropriate dimensions. Similarly, $\succeq (\preceq)$ implies positive (negative) semi-definiteness. The operator norm is denoted $\|P\|$ and is defined as the maximum singular value of $P$. For a symmetric matrix, we use the $\star$ notation to imply symmetric terms, that is,
\[
\begin{bmatrix} a & b \\ b^\top & c\end{bmatrix} \equiv \begin{bmatrix}
a & b \\ \star & c
\end{bmatrix}.
\] 
For Lebesgue integrable functions $g, h$, we use the symbol $\ast$ to denote the convolution operator, that is,
\[
g\ast h\triangleq \intinf h(t-\tau)g(t)\,d\tau = \intinf h(t)g(t-\tau)\,d\tau.
\]

\section{Problem Statement and Proposed Solution}\label{sec:prob}
We begin by describing the class of systems considered in the paper.  
\subsection{Plant model and problem statement}
We consider a nonlinear plant modeled by
\begin{subequations}
	\label{eq:sys_nom}
	\begin{align}
	\dot x(t) &= Ax(t)+B_f f(t,u,y,q) + B_g g(t,u,y) +  G w_x(t),\label{eq:sys_nom_a}\\
	y(t) &= C x(t) + D w_y(t).
	\end{align}
\end{subequations}
Here, $x\triangleq x(t)\in\mathbb{R}^{n_x}$ is the state vector, $u\triangleq u(t)\in\mathbb{R}^{n_u}$ is the control action vector, $y\triangleq y(t)\in\mathbb{R}^{n_y}$ is the vector of measured outputs. The nonlinear function $g\triangleq g(t,u,y):\mathbb{R}\times\mathbb{R}^{n_u}\times\mathbb{R}^{n_y}\mapsto\mathbb{R}^{n_g}$ models nonlinearities in the system \textit{whose arguments are available} at each time instant $t$.

Let the function $f\triangleq f(t,u,y,q):\mathbb{R}\times\mathbb{R}^{n_u}\times \mathbb{R}^{n_y}\times\mathbb{R}^{n_q}\mapsto\mathbb{R}^{n_f}$ denote the system nonlinearities \textit{whose argument $q\in\mathbb{R}^{n_q}$ has to be estimated}, where $$q \triangleq C_q x,$$ and $C_q\in\mathbb{R}^{n_q}\times\mathbb{R}^{n_x}$. 

The signal $w_x\triangleq w_x(t)\in\mathbb{R}^{m_x}$ is the \textbf{unknown state disturbance}, for example: unmodeled dynamics, actuator faults or attack vectors. The signal $w_y\triangleq w_y(t)\in\mathbb{R}^{m_y}$ models \textbf{unknown measurement/sensor disturbances}, for example: cyber-attacks on the measurement channel or sensor faults. We refer to the vectors $w_x$ and $w_y$ as the \textbf{exogenous input}. The matrices $A$, $B_g$, $B_f$, $G$, $C$ and $D$ are of appropriate dimensions. 

To proceed, we make the following assumptions.
\begin{assumption}\label{ass:local_lipz}
The right-hand-side of~\eqref{eq:sys_nom_a} is locally Lipschitz. 
\end{assumption}
\begin{assumption}\label{asmp:ranks}
The matrices $G$ and $D$ have full column rank, that is, $\rank(G)=m_x$ and $\rank(D)=m_y$. 
\end{assumption}
\begin{assumption}\label{asmp:faults_bdd}
The state disturbance input $w_x(t)$ is Lebesgue integrable.
\end{assumption}

Finally, we make an assumption on the classes of nonlinearities considered in this paper. To this end, we need the following definition.
\begin{definition}[Incremental Multiplier Matrix]
A matrix $M\in\mathbb{R}^{(n_q+n_f)\times (n_q+n_f)}$ is an \textbf{incremental multiplier matrix} if it satisfies an \textbf{incremental quadratic constraint}~(\iqc)
	\begin{equation}
	\label{eq:iqc}
	\begin{bmatrix}
	\delta q  \\ \delta f
	\end{bmatrix}^\top M \begin{bmatrix}
	\delta q \\ \delta f
	\end{bmatrix} \ge 0,
	\end{equation}
	\begin{subequations}\label{eq:del_qp}
		where \begin{equation}
		\delta q \triangleq q_1 - q_2 \in\mathbb{R}^{n_q}
		\end{equation} 
		and 
		\begin{equation}
		\delta f \triangleq f(t, u, y, q_1) - f(t, u, y, q_2) \in\mathbb{R}^{n_f}
		\end{equation}
	\end{subequations}
	for all $(t,u,y,q_1,q_2) \in\mathbb{R}\times\mathbb{R}^{n_u} \times\mathbb{R}^{n_y} \times\mathbb{R}^{n_q}\times \mathbb{R}^{n_q}$.
\end{definition}
To illustrate the concept of the incremental quadratic constraint, we provide the following examples.
\begin{example}
Consider the nonlinearity $f(t,y,q) =\cos q$. Since
\[
|\cos q_1-\cos q_2| \le |q_1-q_2|,
\]
we have 
\[
(q_1-q_2)^2 - (\cos q_1-\cos q_2)^2 \ge 0\,,
\]
that is,
\[
\begin{bmatrix}
q_1 - q_2 \\ \cos q_1 -\cos {q_2}
\end{bmatrix}^\top \begin{bmatrix}
1 & 0 \\ 0 & -1
\end{bmatrix}\begin{bmatrix}
q_1 - q_2 \\ \cos q_1 -\cos {q_2}
\end{bmatrix}\ge 0.
\]
Hence, an incremental multiplier matrix for $f$ is
\begin{equation}\label{eq:imm1}
M = \zeta\begin{bmatrix}
1 & 0 \\ 0 & -1
\end{bmatrix},
\end{equation}
for any $\zeta>0$.
\end{example}
\begin{example}
Consider the nonlinearity $f(t,y,q) = q|q|$, which is not globally Lipschitz. The nonlinearity $f$ satisfies the inequality
\[
(q_1|q_1|-q_2|q_2|)(q_1-q_2) \ge 0,
\]
for any $q_1, q_2\in\mathbb R$. This 
can be rewritten as
\[
\begin{bmatrix}
q_1 - q_2 \\ q_1|q_1|-q_2|q_2|
\end{bmatrix}^\top \begin{bmatrix}
0 & 1 \\ 1 & 0
\end{bmatrix}\begin{bmatrix}
q_1 - q_2 \\ q_1|q_1|-q_2|q_2|
\end{bmatrix} \ge 0\,.
\]
  Hence, an~incremental multiplier matrix~for $f(q)$ is 
\begin{equation}\label{eq:imm2}
M= \zeta\begin{bmatrix}
0 & 1\\ 1 & 0
\end{bmatrix}
\end{equation}
for any $\zeta>0$.
\end{example}
\begin{remark}
Clearly, if a nonlinearity has a non-zero  incremental multiplier matrix, it  is not unique.
Any positive scalar multiplier of an incremental multiplier matrix is also an~incremental multiplier matrix.
\end{remark}
The class of nonlinearities satisfying~\iqc~contains a wide class of nonlinearities, including globally- and one-sided Lipschitz nonlinearities, incrementally sector-bounded nonlinearities, incrementally positively real nonlinearities, and nonlinearities with derivatives lying in cones or polytopes. For more details regarding the construction of the incremental multiplier matrix for different categories of nonlinearities, we refer the reader to the Appendix. For the more interested reader, we refer to~\cite{iqs_corless,acikmese11obs}.
\begin{figure}[!ht]
	\centering
	\includegraphics[width=0.65\columnwidth]{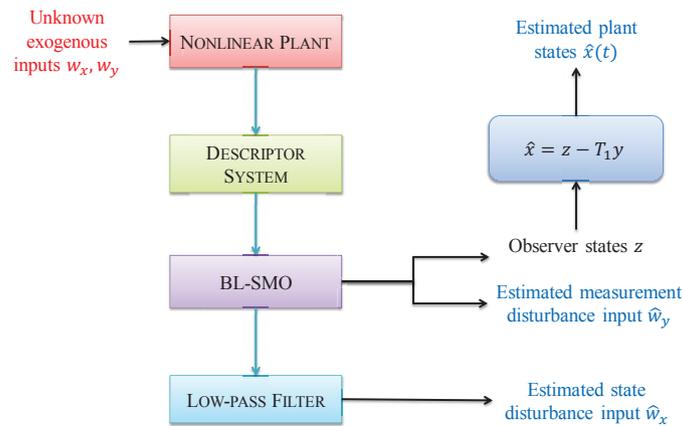}
	\caption{Overview of proposed state and exogenous input estimation scheme. The \textbf{unknown} exogenous inputs $w_x, w_y$ acting on the nonlinear plant are shown in red. The estimated state $\hat x$ and estimated exogenous inputs $\hat w_y, \hat w_x$ are shown in blue.}
	\label{fig:overview}
\end{figure}

\subsection{Objective and overview of the proposed solution}
Our \textbf{objective} is to design a boundary-layer sliding mode observer (BL-SMO) for the nonlinear system~\eqref{eq:sys_nom} that can simultaneously reconstruct the state $x(t)$, and the unknown exogenous inputs $w_x(t), w_y(t)$. We first rewrite the nonlinear plant~\eqref{eq:sys_nom} as a generalized descriptor system whose state is the augmented state $\begin{bmatrix}
x^\top & w_y^\top
\end{bmatrix}^\top$. Subsequently, we use the descriptor system as a platform to design a BL-SMO and guarantee ultimate bounds on the estimation error of the augmented state, that is, the estimation error of $x$ and $w_y$. We conclude this section by demonstrating that the BL-SMO error dynamics converge to the boundary-layer sliding manifold in finite time, which will be used in the sequel to reconstruct the state disturbance input $w_x$.
For convenience, an overview of the scheme is provided in Figure~\ref{fig:overview}.
\subsection{Generalized descriptor formulation}
Let
\[
\bar x \triangleq \begin{bmatrix}
x \\ w_y
\end{bmatrix} \in \mathbb{R}^{n_x+m_y}
\]
be an augmented state vector. Also let \begin{align*}
\bar E &= \begin{bmatrix}
I_{n_x} & 0
\end{bmatrix}\in\mathbb{R}^{n_x\times (n_x+m_y)}\\
\bar A &= \begin{bmatrix}
A & 0
\end{bmatrix}\in\mathbb{R}^{n_x\times (n_x+m_y)}\\
\bar C &= \begin{bmatrix}
C & D
\end{bmatrix}\in\mathbb{R}^{n_y\times (n_x+m_y)}.
\end{align*}
Then we can represent the nonlinear plant~\eqref{eq:sys_nom} as a descriptor system
\begin{subequations}
	\label{eq:descr_sys}
	\begin{align}
	\label{eq:descr_sys_a}
	\bar E\dot{\bar x}(t) &= \bar A \bar x(t)+ B_f f(t,u,y, C_q\bar E\bar x) + B_g g(t,u,y) + G w_x(t),\\
	y(t) &= \bar C \bar x(t).
	\end{align}
\end{subequations}
We illustrate this with an example.
\begin{example}
Suppose we have the nonlinear system
\[
\dot x = \begin{bmatrix}
x_1 + u\\
x_1 + 3x_2 - x_2^3 + w_x
\end{bmatrix}, \quad y = x_1 + w_y.
\]
Clearly, we can write this in the form~\eqref{eq:sys_nom} with
\[
A = \begin{bmatrix}
1 & 0 \\ 1 & 3
\end{bmatrix}, \, B_f = \begin{bmatrix}
0 \\ -1
\end{bmatrix}, \, B_g = \begin{bmatrix}
1 \\ 0
\end{bmatrix}, \, G = \begin{bmatrix}
0 \\ 1
\end{bmatrix}, \, C = \begin{bmatrix}
1 \\ 0
\end{bmatrix}^\top, D = 1, C_q = \begin{bmatrix}
0 \\ 1
\end{bmatrix}^\top, \, f = x_2^3, \; g =  u.
\]
Thus, the descriptor system can be written as
\[
\begin{bmatrix}
1 & 0 & 0\\
0 & 1 & 0
\end{bmatrix} \begin{bmatrix}
\dot x \\ \dot w_y
\end{bmatrix} = \begin{bmatrix}
1 & 0 & 0\\ 1 & 3 & 0
\end{bmatrix}\begin{bmatrix}
x \\ w_y
\end{bmatrix} + \begin{bmatrix}
1 \\ 0
\end{bmatrix} u + \begin{bmatrix}
0 \\ -1
\end{bmatrix}x_2^3 + \begin{bmatrix}
0\\ 1
\end{bmatrix}w_x, \quad y = \begin{bmatrix}
1 & 0 & 1
\end{bmatrix}\begin{bmatrix}
x \\ w_y
\end{bmatrix}.
\]\qed
\end{example}
\begin{remark}
Note that the descriptor system~\eqref{eq:descr_sys} is constructed for developing the observer. It is not computationally implemented. For simulations, we use the original nonlinear plant~\eqref{eq:sys_nom}.
\end{remark}
To proceed, we require the following technical result.
\begin{lemma}\label{prop:T1_T2}
	Suppose the number of measured outputs is greater than or equal to the number of sensor disturbances; that is, $n_y \ge m_y$. Then there exist two matrices $T_1\in\mathbb{R}^{(n_x+m_y)\times n_x}$ and $T_2\in\mathbb{R}^{(n_x+m_y)\times n_y}$ such that
	\begin{equation}\label{eq:T1_T2}
	T_1\bar E - T_2 \bar C = I_{n_x+m_y}.
	\end{equation}
\end{lemma}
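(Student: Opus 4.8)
The plan is to recognize~\eqref{eq:T1_T2} as a \emph{left-inverse} problem. Rewriting $T_1 \bar E - T_2 \bar C = I_{n_x+m_y}$ as
\[
\begin{bmatrix} T_1 & -T_2 \end{bmatrix} \begin{bmatrix} \bar E \\ \bar C \end{bmatrix} = I_{n_x+m_y},
\]
the existence of $T_1, T_2$ is equivalent to the stacked matrix $\mathcal M \triangleq \begin{bmatrix} \bar E \\ \bar C \end{bmatrix} \in \mathbb{R}^{(n_x+n_y)\times(n_x+m_y)}$ possessing a left inverse. Such a left inverse exists precisely when $\mathcal M$ has full column rank $n_x+m_y$, so the entire argument reduces to a single rank computation.

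First I would substitute the definitions of $\bar E$ and $\bar C$ to obtain the explicit block form
\[
\mathcal M = \begin{bmatrix} I_{n_x} & 0 \\ C & D \end{bmatrix}.
\]
To establish full column rank, I would show that the only vector $v = \begin{bmatrix} v_1^\top & v_2^\top \end{bmatrix}^\top$ in its kernel is $v=0$: the top block forces $v_1 = 0$, and the bottom block then collapses to $D v_2 = 0$. This is exactly where Assumption~\ref{asmp:ranks} enters, since $\rank(D)=m_y$ means $D$ has trivial kernel, forcing $v_2 = 0$. The hypothesis $n_y \ge m_y$ is precisely what permits $\rank(D) = m_y$ (a tall or square $D$), so the two conditions together guarantee $\mathcal M$ has full column rank.

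Having established the rank condition, existence of $T_1, T_2$ follows abstractly from the existence of a left inverse; I would also exhibit an explicit solution to make the result constructive. Using the left inverse $D^+ \triangleq (D^\top D)^{-1} D^\top$ of $D$, I would propose
\[
T_1 = \begin{bmatrix} I_{n_x} \\ -D^+ C \end{bmatrix}, \qquad T_2 = \begin{bmatrix} 0 \\ -D^+ \end{bmatrix},
\]
and close with a one-line block multiplication verifying that $T_1 \bar E - T_2 \bar C$ collapses to $\mathrm{diag}(I_{n_x}, I_{m_y})$: the two $-D^+ C$ cross terms cancel, and $D^+ D = I_{m_y}$ produces the lower identity block.

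I do not anticipate a genuine obstacle here, as the statement is a structural consequence of $D$ having full column rank. The only points requiring care are the bookkeeping of block dimensions (ensuring $T_1$ is $(n_x+m_y)\times n_x$ and $T_2$ is $(n_x+m_y)\times n_y$) and confirming that the proposed $T_2$ annihilates $C$ in the correct block so that the off-diagonal terms vanish.
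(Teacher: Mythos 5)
Your proof is correct and takes essentially the same approach as the paper: both recast \eqref{eq:T1_T2} as the existence of a left inverse of the stacked matrix (the paper stacks $\bar E$ over $-\bar C$ to form $V$; you stack $\bar E$ over $\bar C$ and absorb the sign into $T_2$), with full column rank supplied by $\rank(D)=m_y$ from Assumption~\ref{asmp:ranks}. Your explicit closed-form solution $T_1=\begin{bmatrix}I_{n_x}\\ -(D^\top D)^{-1}D^\top C\end{bmatrix}$, $T_2=\begin{bmatrix}0\\ -(D^\top D)^{-1}D^\top\end{bmatrix}$ is a valid constructive refinement of the paper's abstract extraction of $T_1,T_2$ from an arbitrary left inverse $V^\ell$ (the paper's remark proposes the Moore--Penrose pseudoinverse of $V$ for that purpose).
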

\begin{proof}
	Let $T = \begin{bmatrix}
	T_1 & T_2
	\end{bmatrix}$ and $$V = \begin{bmatrix}
	\bar E \\ -\bar C
	\end{bmatrix}=\begin{bmatrix}
	I & 0 \\ -C & -D
	\end{bmatrix}.$$ 
	
	Computing $T$ reduces to solving the linear equation $TV = I$. By Assumption~\ref{asmp:ranks}, we know that $D$ has full column rank, which implies $V$ has full column rank. Hence, a left inverse of $V$ exists. We denote $V^\ell$ as a left inverse of $V$, that is, $V^\ell V=I$. Clearly, $T = V^\ell$ is a solution to $TV = I$.
	
	Therefore, $T_1$ can be computed by taking the first $n_x$ columns of $V^\ell$ and $T_2$ is the matrix constructed using the last $n_y$ columns of $V^\ell$. 
	This concludes the proof.
\end{proof}

\begin{remark}
	A particular choice of such a left inverse is the Moore-Penrose pseudoinverse, that is, $V^\dagger=(V^\top V)^{-1}V^\top.$
	\qed	\end{remark}
\subsection{Proposed BL-SMO}
Let $$e_y \triangleq y - \bar C\hb x.$$ We propose the following \textbf{boundary layer sliding mode observer} architecture to estimate the plant states $x$ and the exogeneous inputs $w_x$ and $w_y$:
\begin{subequations}
	\label{eq:obs}
	\begin{align}
	\dot z &= Q z + (L_1 - Q T_2) y + T_1 B_g g + T_1 B_f \hat f + T_1 G \hat w_x^\eta\\
		\label{eq:obs_b}
	\hb x &= z - T_2 y\\
	\label{eq:obs_c}
	\hat f &= f(t, u, y, C_q\bar E\hb x + L_2 e_y)\\
	\label{eq:obs_d}\hat w^\eta_x &= \begin{cases}
	\rho\;Fe_y/\|Fe_y\| & \text{if $\|Fe_y\|\ge \eta$}\\
	\rho \; Fe_y/\eta & \text{if $\|Fe_y\|<\eta$},
	\end{cases}
	\end{align}
\end{subequations}
where $y$ is an available (measured) output, $\hat w^\eta_x$ is a~\textbf{continuous injection term} for the sliding mode observer parametrized by the \textbf{smoothing coefficent} $\eta>0$ and 
\[
Q \triangleq T_1\bar A- L_1\bar C.
\]
The signal $\hat w^\eta_x(t)$ will be used in the sequel to recover the state disturbance input $w_x(t)$.

The observer is parameterized by four gain terms: (i) the linear gain $L_1 \in\mathbb{R}^{(n_x+m_y)\times n_y}$, (ii) the innovation term $L_2\in\mathbb{R}^{n_q \times n_y}$ which improves the estimate of the known nonlinearity $f$ by adding a degree of freedom in the design methodology, (iii) the matrix $F\in\mathbb{R}^{n_y\times m_x}$ and, (iv) the scalar $\rho>0$. 
\begin{remark}\label{rk:abs_cont_e}
	Assumption~\ref{ass:local_lipz} implies that the observer ODEs also have unique classical solutions as $\hat w^\eta_x\in\mathcal C^\infty$, and hence, the functions $\bar x, \hb x$ are absolutely continuous.
\qed	\end{remark}
\subsection{Derivation of error dynamics}
We investigate the error dynamics of the proposed observer. To this end, we first require the following result which is easily proven by verification.
\begin{lemma}\label{prop:Q_zero}
	Let $Q=T_1\bar A - L_1\bar C$ and $R=L_1 - Q T_2$, where $T_1, T_2$ are constructed as described in Lemma~\ref{prop:T1_T2}. Then $T_1 \bar A - Q T_2 \bar C - R\bar C - Q = 0.$\qed
\end{lemma}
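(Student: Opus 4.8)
The plan is to establish the identity purely by substituting the definitions of $Q$ and $R$; no deeper structural facts are needed. First I would expand the term $R\bar C$ using $R = L_1 - QT_2$, which gives $R\bar C = L_1\bar C - QT_2\bar C$.

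Next I would substitute this into the left-hand side of the claimed identity. The explicit term $-QT_2\bar C$ appearing in the statement is then cancelled exactly by the $+QT_2\bar C$ produced by $-R\bar C = -L_1\bar C + QT_2\bar C$, leaving
\begin{equation*}
T_1\bar A - QT_2\bar C - R\bar C - Q = T_1\bar A - L_1\bar C - Q.
\end{equation*}
Finally, I would invoke the definition $Q = T_1\bar A - L_1\bar C$ to rewrite $T_1\bar A - L_1\bar C$ as $Q$, so that the right-hand side collapses to $Q - Q = 0$, completing the verification.

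I do not anticipate any genuine obstacle: the result is a formal algebraic consequence of the two definitions, and notably it does not even use the defining property $T_1\bar E - T_2\bar C = I$ from Lemma~\ref{prop:T1_T2}. That lemma is cited only to fix how $T_1$ and $T_2$ are constructed for the observer; the cancellation here is independent of their particular values. I would therefore simply present the short chain of substitutions above, exactly as the phrase ``easily proven by verification'' suggests.
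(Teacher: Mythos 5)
Your verification is correct and is exactly the computation the paper leaves implicit when it states the lemma ``is easily proven by verification'': expand $R\bar C = L_1\bar C - QT_2\bar C$, cancel the $QT_2\bar C$ terms, and use the definition $Q = T_1\bar A - L_1\bar C$ to collapse the remainder to $Q - Q = 0$. Your side observation is also accurate: the identity is a purely formal consequence of the definitions of $Q$ and $R$ and does not invoke the relation $T_1\bar E - T_2\bar C = I$, which the paper uses elsewhere (in deriving the error dynamics), not here.
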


We define the observer error to be $$\bar e = \bar x- \hb x.$$ Using~\eqref{eq:T1_T2} and~\eqref{eq:obs}, the observer \textbf{error dynamics} are given by
\begin{align*}
\dot{\bar e} &= \dot{\bar x} - \dot{\hb x}\\
&= \dot{\bar x} - \dot z + T_2 \bar C \dot{\bar x}\\
&= T_1\bar E \dot{\bar x} - \dot{z}\\
&= Q\bar e + (T_1 \bar A - Q T_2 \bar C - R\bar C - Q)\bar x +T_1 B_f (f-\hat f) + T_1G (w_x-\hat w^\eta_x).
\end{align*}
Using Lemma~\ref{prop:Q_zero} yields
\begin{equation}\label{eq:err_dyn}
\dot{\bar e} = (T_1 \bar A -L_1\bar C) \bar e + T_1 B_f (f-\hat f) + T_1G (w_x-\hat w^\eta_x).
\end{equation}
Our \textbf{objective} is to design the observer gains $L_1$, $L_2$ and $F$ to ensure that the error dynamical system~\eqref{eq:err_dyn} is ultimately bounded and the effect of the unknown input $w_x$ is attenuated.

\subsection{Ultimate boundedness of observer error dynamics}
In order to investigate the stability properties of the observer error~\eqref{eq:err_dyn}, we need the following technical lemma.
\begin{lemma}\label{lem:1}
	Suppose $M=M^\top$ is an incremental multiplier matrix (see Definition~1) for the nonlinearity $f$ and let $$\xi \triangleq \begin{bmatrix}
	\bar e \\ f-\hat f
	\end{bmatrix},$$ where $\hat f$ is defined in~\eqref{eq:obs_c}. Then the condition
	\[
	\xi^\top \begin{bmatrix}
	C_q \bar E - L_2 \bar C & 0 \\ 0 & I
	\end{bmatrix}^\top M \begin{bmatrix}
	C_q\bar E  - L_2 \bar C & 0 \\ 0 & I
	\end{bmatrix}\xi \ge 0
	\]
	holds for any $\bar x, \hb{x}\in\mathbb{R}^{n_x+m_y}$.
\end{lemma}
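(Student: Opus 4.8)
The plan is to invoke the incremental quadratic constraint~\eqref{eq:iqc} that defines $M$ with a carefully chosen pair of arguments $q_1, q_2$, and then to recognize the resulting quadratic form as exactly the one appearing in the statement. The natural choice is to take $q_1$ to be the \emph{true} argument of the nonlinearity $f$ and $q_2$ to be the argument actually fed into $\hat f$ in~\eqref{eq:obs_c}. Concretely, I would set $q_1 \triangleq C_q \bar E \bar x$ and $q_2 \triangleq C_q \bar E \hb x + L_2 e_y$. With these choices, $\delta f = f(t,u,y,q_1) - f(t,u,y,q_2)$ is precisely $f - \hat f$, so the lower block of $\begin{bmatrix} \delta q \\ \delta f \end{bmatrix}$ already coincides with the lower block of $\xi$.

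The key linking step is to simplify the upper block $\delta q = q_1 - q_2 = C_q \bar E(\bar x - \hb x) - L_2 e_y = C_q \bar E \bar e - L_2 e_y$. Here I would use the identity $e_y = \bar C \bar e$, which follows from the descriptor output equation $y = \bar C \bar x$ together with the definitions $e_y \triangleq y - \bar C \hb x$ and $\bar e \triangleq \bar x - \hb x$. Substituting this in gives $\delta q = (C_q \bar E - L_2 \bar C)\bar e$, so that the stacked vector factors as
\[
\begin{bmatrix} \delta q \\ \delta f \end{bmatrix} = \begin{bmatrix} C_q \bar E - L_2 \bar C & 0 \\ 0 & I \end{bmatrix} \xi .
\]

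Finally, I would apply the constraint~\eqref{eq:iqc}, which by Definition~1 holds for \emph{every} pair $(q_1,q_2)$ and hence for the particular pair chosen above; substituting the factorization and pulling the block matrix out of the quadratic form yields the claimed inequality, valid for all $\bar x, \hb x$. Since the argument is a sequence of identities followed by a single application of the definition of the incremental multiplier matrix, I expect no genuine obstacle. The only point requiring care is the verification $e_y = \bar C \bar e$: this is precisely what couples the innovation term $L_2 e_y$ in $\hat f$ to the observer error $\bar e$ and makes the off-diagonal structure collapse cleanly into the block-diagonal congruence transformation appearing in the statement.
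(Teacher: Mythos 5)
Your proposal is correct and follows exactly the paper's own argument: the same choice $q_1 = C_q\bar E\bar x$, $q_2 = C_q\bar E\hb x + L_2 e_y$, the same reduction $e_y = \bar C\bar e$ giving $\delta q = (C_q\bar E - L_2\bar C)\bar e$, and the same factorization of $\begin{bmatrix}\delta q^\top & \delta f^\top\end{bmatrix}^\top$ through the block-diagonal matrix before invoking the incremental quadratic constraint~\eqref{eq:iqc}. No gaps; nothing further is needed.
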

\begin{proof}
	Recall that $y = \bar C\bar x$. From~\eqref{eq:descr_sys_a} and~\eqref{eq:obs_c}, we have 
	\[
	f-\hat f = f(t,u,y, C_q \bar E \bar x) - f(t,u,y, C_q \bar E\hb x + L_2(y - \bar C\hb x)).
	\]
	Let $q_1 = C_q \bar E \bar x$, $q_2 = C_q \bar E\hb x + L_2(y - \bar C\hb x) = C_q \bar E \hb x + L_2\bar C (\bar x - \hb x)$, $\delta q \triangleq q_1 - q_2$, and $\delta f \triangleq f(t,u,y,q_1) - f(t,u,y,q_2)$. 
	Hence, we obtain $\delta q = (C_q \bar E - L_2 \bar C)\bar e$.
	Now, we can write
	\begin{equation}
	\label{eq:temp1}
	\begin{bmatrix}
	\delta q \\ \delta f
	\end{bmatrix} = \begin{bmatrix}
	C_q \bar E - L_2 \bar C & 0 \\ 0 & I
	\end{bmatrix}\begin{bmatrix}
	\bar e \\ \delta f
	\end{bmatrix}= \begin{bmatrix}
	C_q \bar E - L_2 \bar C & 0 \\ 0 & I
	\end{bmatrix}\xi.
	\end{equation}
	Recalling that the matrix $M$ is an incremental multiplier matrix of $f$, and substituting~\eqref{eq:temp1} into the incremental quadratic constraint~\eqref{eq:iqc}, we obtain the desired matrix inequality.
\end{proof}

Herein, we present sufficient conditions in the form of matrix inequalities for the observer design.
\begin{theorem}\label{thm:obs_design}
Let $\|w_x(\cdot)\|_\infty\le \rho_{x}$ and $\alpha>0$. Suppose there exist matrices $L_1$, $L_2$, $F$, $P=P^\top\succ0$, an incremental multiplier matrix $M=M^\top$ for the nonlinearity $f$, and scalars $\rho, \mu>0$, which satisfy
	\begin{subequations}\label{eq:thm1}
		\begin{align}
		\Xi + \Phi^\top M \Phi &\preceq 0,\label{eq:thm1_a}\\
		G^\top T_1^\top P &= F \bar C,\label{eq:thm1_b}\\
		\begin{bmatrix} P & I \\ I & \mu \end{bmatrix} &\succeq 0\label{eq:thm1_c}\\
		\rho &\ge \rho_{x},
		\end{align}
	\end{subequations}
	where 
	\[
	\Xi = \begin{bmatrix}
		\bar A^\top T_1^\top P - \bar C^\top Y_1 + P T_1 \bar A - Y_1 \bar C + 2\alpha P & PT_1 B_f\\
		B_f^\top T_1^\top P & 0 
		\end{bmatrix} \quad \text{and} \quad 
	\Phi =\begin{bmatrix}
	C_q\bar E  - L_2 \bar C & 0 \\ 0 & I
	\end{bmatrix},
	\]
	then the error trajectories of the BL-SMO~\eqref{eq:obs} with gains $L_1 = P^{-1}Y_1$, $L_2$, $F$, and $\rho$ satisfies
	\begin{equation}\label{thm1:E}
	\limsup_{t\to\infty} \|\bar e(t)\|\le \sqrt{\frac{\mu\eta\rho_{x}}{\alpha}},
	\end{equation}
	where $\eta>0$ is the smoothing coefficient of the continuous injection term $\hat w^\eta_x$.
\end{theorem}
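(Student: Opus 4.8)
The plan is to use the quadratic Lyapunov function $V(\bar e)=\bar e^\top P\bar e$ with the $P\succ0$ supplied by~\eqref{eq:thm1}, and to establish a dissipation inequality $\dot V\le-2\alpha V+2\eta\rho_x$ along the error dynamics~\eqref{eq:err_dyn}. Differentiating $V$ along~\eqref{eq:err_dyn} and recalling $Q=T_1\bar A-L_1\bar C$ gives
\[
\dot V=\bar e^\top(PQ+Q^\top P)\bar e+2\bar e^\top PT_1B_f(f-\hat f)+2\bar e^\top PT_1G(w_x-\hat w^\eta_x),
\]
which I split into an internal/nonlinearity contribution (the first two terms) and an exogenous-input contribution (the last term), bounding each separately.

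For the internal contribution I would note that, with $Y_1=PL_1$, the $(1,1)$ block of $\Xi$ equals $PQ+Q^\top P+2\alpha P$ and its off-diagonal block reproduces $PT_1B_f$. Hence, for $\xi=[\bar e^\top,\,(f-\hat f)^\top]^\top$,
\[
\xi^\top\Xi\,\xi=\bar e^\top(PQ+Q^\top P)\bar e+2\bar e^\top PT_1B_f(f-\hat f)+2\alpha V .
\]
An $S$-procedure argument then closes this part: Lemma~\ref{lem:1} guarantees $\xi^\top\Phi^\top M\Phi\,\xi\ge0$, so the linear matrix inequality~\eqref{eq:thm1_a} gives $\xi^\top\Xi\,\xi\le-\xi^\top\Phi^\top M\Phi\,\xi\le0$, i.e.\ the first two terms of $\dot V$ are bounded above by $-2\alpha V$.

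The main obstacle is the exogenous-input term, which is handled using the gain constraint~\eqref{eq:thm1_b} and a case analysis on the boundary-layer injection~\eqref{eq:obs_d}. Using $G^\top T_1^\top P=F\bar C$ and $\bar C\bar e=e_y$, I rewrite $\bar e^\top PT_1G=(Fe_y)^\top$, so the term equals $2(Fe_y)^\top w_x-2(Fe_y)^\top\hat w^\eta_x$. If $\|Fe_y\|\ge\eta$, then $\hat w^\eta_x=\rho Fe_y/\|Fe_y\|$ gives $-2(Fe_y)^\top\hat w^\eta_x=-2\rho\|Fe_y\|$, while $2(Fe_y)^\top w_x\le2\rho_x\|Fe_y\|\le2\rho\|Fe_y\|$ because $\rho\ge\rho_x$; the two cancel and the contribution is $\le0$. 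If $\|Fe_y\|<\eta$, then $-2(Fe_y)^\top\hat w^\eta_x=-2\rho\|Fe_y\|^2/\eta\le0$ may be discarded, leaving $2(Fe_y)^\top w_x\le2\rho_x\|Fe_y\|<2\eta\rho_x$. In both branches the exogenous-input term is bounded by $2\eta\rho_x$, which is exactly where the smoothing coefficient $\eta$ enters the final bound.

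Combining the two estimates yields $\dot V\le-2\alpha V+2\eta\rho_x$. I would conclude by the comparison lemma, giving $V(t)\le e^{-2\alpha t}V(0)+(\eta\rho_x/\alpha)(1-e^{-2\alpha t})$ and hence $\limsup_{t\to\infty}V(t)\le\eta\rho_x/\alpha$; absolute continuity of $\bar e$ from Remark~\ref{rk:abs_cont_e} justifies this comparison argument. Finally, taking the Schur complement of~\eqref{eq:thm1_c} shows $P\succeq\mu^{-1}I$, so $\|\bar e\|^2\le\mu V$, and therefore $\limsup_{t\to\infty}\|\bar e(t)\|^2\le\mu\eta\rho_x/\alpha$, which is~\eqref{thm1:E}.
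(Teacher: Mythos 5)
Your proof is correct and follows essentially the same route as the paper's: the quadratic Lyapunov function $\mathcal V(\bar e)=\bar e^\top P\bar e$, the S-procedure combining~\eqref{eq:thm1_a} with Lemma~\ref{lem:1}, the two-case analysis on $\|Fe_y\|$ versus $\eta$ using~\eqref{eq:thm1_b}, and the Schur complement of~\eqref{eq:thm1_c} together with a comparison/Gr\"onwall argument to obtain~\eqref{thm1:E}. The only cosmetic difference is that you bound the internal and exogenous contributions separately before recombining, whereas the paper extracts the full dissipation inequality from the LMI first; the content is identical.
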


\begin{proof}
With $Y_1 = PL_1$, we get
\[
\Xi=\begin{bmatrix}
(T_1 \bar A - L_1\bar C)^\top P + P (T_1 \bar A - L_1 \bar C) + 2\alpha P & \star\\
B_f^\top T_1^\top P & 0 
\end{bmatrix}.
\]
Now, we consider a quadratic function of the form $$\mathcal V(\bar e(t)) =\bar e(t)^\top P\bar e(t).$$ Herein, for readability, we omit the argument of $\bar e(t)$. Then, the time derivative of $\mathcal V(\bar e)$ evaluated on the trajectories of the error dynamical system~\eqref{eq:err_dyn} is given by
	\begin{equation*}
	\dot{\mathcal V}(\bar e)= 2\bar e^\top P (T_1 \bar A -L_1\bar C) \bar e + 2 \bar e^\top P T_1 B_f (f-\hat f) + 2\bar e^\top PT_1G (w_x-\hat w^\eta_x).
	\end{equation*}
	
	Let $\xi = \begin{bmatrix}
	\bar e^\top & (f-\hat f)^\top
	\end{bmatrix}^\top$. Then from~\eqref{eq:thm1_a}, we get
	\begin{align*}
	0 &\ge \xi^\top (\Xi + \Phi^\top M\Phi) \xi \\
	&= 2\bar e^\top P (T_1 \bar A -L_1\bar C) \bar e + 2 \bar e^\top P T_1 B_f (f-\hat f) + 2\alpha \bar e^\top P \bar e + \xi^\top\Phi^\top M\Phi \xi\\
	&= \dot{\mathcal{V}}(\bar e) +2\alpha \mathcal{V}(\bar e) + \xi^\top\Phi^\top M\Phi \xi - 2\bar e^\top PT_1G (w_x-\hat w^\eta_x).
	\end{align*}
	From Lemma~\ref{lem:1}, we know that $\xi^\top\Phi^\top M\Phi \xi\ge 0$. Hence,
	\[
	\dot{\mathcal{V}}(\bar e) +2\alpha \mathcal{V}(\bar e) - 2\bar e^\top PT_1G (w_x-\hat w^\eta_x) \le 0.
	\]
	For error states satisfying $\|F\bar C\bar e\|\ge \eta$, we have
	\begin{align}
	\nonumber \dot{\mathcal{V}}(\bar e) &\le -2\alpha \mathcal{V}(\bar e) + 2\bar e^\top PT_1G (w_x-\hat w^\eta_x)\\
	&\le - 2\alpha \mathcal{V}(\bar e) + 2\|w_x\|\|\bar e^\top PT_1G\|-2\bar e^\top PT_1G\hat w_x.
	\label{pf:1_a}
	\end{align}
	Hence, recalling the definition of $\hat w_x^\eta$ from~\eqref{eq:obs_d} and condition~\eqref{eq:thm1_b}, we get
	\begin{align*}
	\dot {\mathcal V}(\bar e)&\le -2\alpha \mathcal{V}(\bar e) + 2\rho_x\|\bar e^\top PT_1G\| - 2\rho\bar e^\top PT_1G\frac{F(y-\bar C\hb x)}{\|F(y-\bar C\hb x)\|}\\
	&=-2\alpha \mathcal{V}(\bar e) + 2\rho_x \|\bar e^\top PT_1G\| - 2\rho \bar e^\top PT_1G \frac{F\bar C\bar e}{\|F\bar C\bar e\|}\\
	&=-2\alpha \mathcal{V}(\bar e) + 2\rho_x\|\bar e^\top PT_1G\| - 2\rho\frac{\|G^\top T_1^\top P \bar e\|^2}{\|G^\top T_1^\top P \bar e\|} \\
	&= -2\alpha \mathcal{V}(\bar e) + 2\rho_x\|\bar e^\top PT_1G\| - 2\rho\|G^\top T_1^\top P \bar e\|\\
	&= -2\alpha \mathcal{V}(\bar e) + 2\|\bar e^\top PT_1G\|(\rho_x-\rho).
	\end{align*}
	By choosing $\rho\ge \rho_x$, we can ensure 
	\begin{equation}
	\label{rk:exp_decay}
	\dot{\mathcal V}(\bar e)\le -2\alpha \mathcal{V}(\bar e),
	\end{equation} which implies global exponential stability of the observer error $\bar e$ to the set $\|F\bar C\bar e\| < \eta$ with decay rate $\alpha$; see for example,~\cite{Barmish1983, Corless1993}, for global exponential stability to a set.
	
	Now consider error states that satisfy $\|F\bar C\bar e\|<\eta$. Then, from~\eqref{pf:1_a}, we obtain
	\begin{align*}
	\dot{\mathcal V} &\le -2\alpha\mathcal V(\bar e) + 2\|G^\top T_1^\top P\bar e\|\|w_x\|-2\bar e^\top PT_1G\frac{F\bar C\bar e}{\eta}\\
	&=-2\alpha\mathcal V(\bar e) + 2\|G^\top T_1^\top P\bar e\|\|w_x\|-2\bar e^\top PT_1G\frac{G^\top T_1^\top P\bar e}{\eta}\\
	&= -2\alpha\mathcal V(\bar e) + 2\|G^\top T_1^\top P\bar e\|\|w_x\|-2\frac{\|G^\top T_1^\top P\bar e\|^2}{\eta}\\
	&\le -2\alpha\mathcal V(\bar e) + 2\|G^\top T_1^\top P\bar e\|\|w_x\|.
	\end{align*}
	Using~\eqref{eq:thm1_b} yields
	\[
	\dot{\mathcal V}\le -2\alpha\mathcal V(\bar e) + 2\|F\bar C\bar e\|\|w_x\| \le -2\alpha\mathcal V(\bar e) + 2\eta\,\|w_x\|.
	\] 
	Recalling that $\|w_x\|\le \|w_x(\cdot)\|_\infty\le \rho_x$, we obtain
	\begin{equation}\label{pf1:b}
	\dot{\mathcal{V}}(\bar e)\le -2\alpha\mathcal V(\bar e) + 2\eta\rho_x.
	\end{equation}
	Summarizing, we write
	\begin{equation*}
	\dot{\mathcal V} \le \begin{cases}
	-2\alpha\mathcal V & \text{if $\|Fe_y\|\ge \eta$}\\
	-2\alpha\mathcal V + 2\eta\rho_x & \text{if $\|Fe_y\|<\eta$}.
	\end{cases}
	\end{equation*}
	The above implies that for any $\bar e\in\mathbb{R}^{n_x+m_y}$, the inequality~\eqref{pf1:b} holds.
	
	Note that taking Schur complements of~\eqref{eq:thm1_c} yields $I\preceq \mu P $. Hence $\bar e^\top \bar e\le \mu\mathcal V(\bar e)$. We use this inequality and the Bellman-Gr\"onwall inequality on~\eqref{pf1:b}. This yields
	\begin{equation*}
	\|\bar e\|^2 \le \mu e^{-2\alpha(t-t_0)}\mathcal V(e(t_0)) + \frac{2\mu\eta\rho_x}{2\alpha} \left(1 - e^{-2\alpha(t-t_0)}\right).
	\end{equation*}
	Taking the limit superior concludes the proof.
\end{proof}
\begin{remark}
With $L_2$ and $\alpha$ fixed, the conditions in Theorem~\ref{thm:obs_design} devolve into a convex programming problem in $P$, $Y_1$, $F$, $M$, $\mu$ and $\rho$. Additionally, solving the convex problem $\min_{\mu>0}\mu$ subject to the constraints~\eqref{eq:thm1} with fixed $L_2$ and $\alpha$ results in tighter ultimate bounds on $\|\bar e\|$.
\end{remark}
Methods for converting the matrix inequality~\eqref{eq:thm1_a} into LMIs without pre-fixing $L_2$ are provided in~\cite{acikmese11obs}.
	
\begin{remark}
Suppose the conditions of Theorem~\ref{thm:obs_design} are satisfied. Let $\hat x = \begin{bmatrix}
I_{n_x} & 0
\end{bmatrix}\hb x$ and $\hat w_y = \begin{bmatrix}
0 & I_{m_y}
\end{bmatrix} \hb x$. \begin{subequations}
\label{eq:cor1}
Then the following holds for the plant state estimation error: 
\begin{equation}
\limsup_{t\to\infty}\|x(t) - \hat x(t)\|\le \sqrt{\mu\eta\rho_x/\alpha},
\end{equation}
and the measurement noise estimation error satisfies
\begin{equation}
\limsup_{t\to\infty} \|w_y(t) - \hat w_y(t)\| \le \sqrt{\mu\eta\rho_x/\alpha}.
\end{equation}
\end{subequations}
For a fixed $\eta$, $\alpha$ and $L_2$, we can minimize $\mu$ over the space of feasible solutions. This attenuates the effect of $\eta$, thereby producing more accurate estimates of the state and measurement disturbance/sensor attack vectors.
\qed\end{remark}

\begin{remark}
	Note that $\limsup_{t\to\infty} \|\bar e\| \to 0$ as $\eta\to 0$, which implies that under ideal sliding ($\eta=0$) the matched disturbance can be completely rejected, and exact estimates of the plant state $x$ and output disturbance $w_y$ can be obtained.
	\qed\end{remark}


Summarizing, we have discussed a method to obtain estimates of the state $x$ and measurement disturbance $w_y$ to a specified degree of accuracy. However, certain applications such as fault detection~\cite{yan07} and attack detection~\cite{Teixeira2010,Pasqualetti2013,Mo2014}, require the estimation of the unknown state disturbance input $w_x$. The following subsection provides a crucial ingredient for the simultaneous recovery of $w_x$ along with $w_y$ and $x$.
\subsection{Finite time convergence to the boundary-layer sliding manifold}
We will demonstrate that the trajectories of the error system~\eqref{eq:err_dyn} are driven to the boundary layer sliding manifold in finite time. We begin with the following assumption on the plant states.
\begin{assumption}\label{asmp:plant_bdd}
The state vector $x(t)$ and sensor disturbance $w_y(t)$ of the nonlinear plant~\eqref{eq:sys_nom} are bounded, and known.
\end{assumption}
The boundedness of plant states is reasonable for any practical system. We believe the restriction placed on the output disturbance is also not conservative, as measurement channels will transmit bounded signals, and attack vectors will be of finite magnitude.

Herein, for brevity, we consider $f$ to be a nonlinear function with the single argument $q\in\mathbb{R}^{n_q}$. We use the following definition from~\cite[p. 406]{Aronszajn1956} to proceed with the development of technical results in this section. 
\begin{definition}[Minimal Modulus of Continuity]
	The minimal modulus of continuity for any nonlinearity $\varphi(q)$ is given by
	$$
	\gamma_\varphi(r)=\sup\{\|\varphi(q_1)-\varphi(q_2)\| : q_1, q_2\in\mathbb{R}^{n_q}, \|q_1-q_2\|\le r\},
	$$ 
	for all $r\ge 0$.
\end{definition}
\begin{remark}\label{rk:mod_of_cont_dec}
	An important property of the modulus of continuity is that it is a non-decreasing function, that is, if $0<r_1<r_2$ then $\gamma_\phi(r_1)\le\gamma_\phi(r_2)$. This follows from the definition of the supremum.\qed
	\end{remark}
We also pose a restriction on the class of nonlinearities considered.
\begin{assumption}\label{ass:uc_r_to_zero}
	The nonlinearity $f(q)$ considered in the plant~\eqref{eq:sys_nom} is uniformly continuous on $\mathbb{R}^{n_q}$.
\end{assumption}

\begin{remark}\label{rk:unif_cont}
	Most nonlinearities encountered in practical applications adhere to Assumption~\ref{ass:uc_r_to_zero}.
	For example, if a function $f$ is continuously differentiable with bounded derivative, H\"older continuous with exponent $\beta\in (0,1]$, or globally Lipschitz continuous (which is a very common assumption in the literature), then $f$ is also uniformly continuous. Hence, our assumption is not restrictive.\qed
	\end{remark}
We now present the following technical result.
\begin{lemma}\label{lem:4}
	If $f(q)$ is uniformly continuous on $\mathbb{R}^{n_q}$ then $\gamma_f(r) \to 0$ as $r\to 0$.
\end{lemma}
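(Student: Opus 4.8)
The plan is to unwind the $\varepsilon$--$\delta$ characterization of uniform continuity and substitute it directly into the supremum defining $\gamma_f$. First I would fix an arbitrary $\varepsilon>0$ and invoke uniform continuity of $f$ on $\mathbb{R}^{n_q}$ to obtain a $\delta=\delta(\varepsilon)>0$ such that $\|f(q_1)-f(q_2)\|\le \varepsilon/2$ whenever $\|q_1-q_2\|\le\delta$, for all $q_1,q_2\in\mathbb{R}^{n_q}$.

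Next, for any $r$ with $0\le r\le\delta$, every admissible pair $(q_1,q_2)$ appearing in the supremum that defines $\gamma_f(r)$ satisfies $\|q_1-q_2\|\le r\le\delta$, and hence $\|f(q_1)-f(q_2)\|\le\varepsilon/2$. Taking the supremum over all such pairs then yields $\gamma_f(r)\le \varepsilon/2 < \varepsilon$. The half-$\varepsilon$ is deliberate: it converts the per-pair bound into a strict inequality after the supremum is taken, which is the single place where a careless argument can fail.

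Finally, recalling from Remark~\ref{rk:mod_of_cont_dec} that $\gamma_f$ is non-decreasing and noting that it is bounded below by $0$, the right-hand limit $\lim_{r\to 0^+}\gamma_f(r)$ exists and coincides with $\inf_{r>0}\gamma_f(r)\ge 0$. Since for each $\varepsilon>0$ the previous step produces a $\delta$ with $\gamma_f(r)<\varepsilon$ for all $r\in(0,\delta]$, this infimum is bounded above by $\varepsilon$ for every $\varepsilon>0$, and is therefore equal to $0$. Hence $\gamma_f(r)\to 0$ as $r\to 0$.

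Because the argument is essentially a direct translation of the definitions of uniform continuity, the supremum, and the limit, I do not anticipate any genuine obstacle. The only point requiring care is the strict-versus-nonstrict inequality at the supremum step, which the factor $\varepsilon/2$ resolves cleanly; the monotonicity noted above is convenient but not strictly necessary, since the per-$\varepsilon$ bound already pins down the limit.
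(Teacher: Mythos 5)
Your proof is correct and takes essentially the same route as the paper's: fix $\varepsilon>0$, extract $\delta$ from uniform continuity, bound the supremum defining $\gamma_f(r)$ for all $r\le\delta$, and conclude. The $\varepsilon/2$ refinement and the closing infimum argument are harmless embellishments of what the paper accomplishes in one line by bounding $\gamma_f(\delta)$ and invoking the monotonicity of $\gamma_f$ from Remark~\ref{rk:mod_of_cont_dec}.
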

\begin{proof}
	Let $\varepsilon>0$. By uniform continuity, there exists $\delta>0$ such that $\|q_1-q_2\|\le \delta$ forces $\|f(q_1)-f(q_2)\|\le \varepsilon$. This implies $\sup \{\|f(q_1)-f(q_2)\| : \|q_1-q_2\|\le \delta\}\le \varepsilon$ which, in turn, implies that $\gamma(\delta)\le \varepsilon$. Since $\gamma(\cdot)$ is non-decreasing, we have $\gamma(r)\le \varepsilon$ for all $r\in [0,\delta]$, which concludes the proof.
\end{proof}


We are now ready to state and prove the following theorem which provides conditions for the observer error trajectories to converge to the boundary-layer sliding manifold
\[
\mathcal{S}_\eta = \{\bar e\in\mathbb{R}^{n_x + m_y}: \|F\bar C \bar e\| < \eta \}
\]
in finite time.
\begin{theorem}\label{thm2}
	Let 
	\begin{equation}\label{eq:f_minus_fhat}
	\delta f = f(C_q\bar E \bar x)-f(C_q\bar E\hb x - L_2(y-\bar C\hb x)),
	\end{equation} 
	$S = F\bar C$, $\sigma = S \bar e$, and
	\begin{equation}\label{eq:lambda_1}
	\lambda_1 \triangleq \lambda_{\min}(G^\top T_1^\top P T_1 G).
	\end{equation} Suppose Assumptions \ref{ass:local_lipz}--\ref{ass:uc_r_to_zero} hold. If there exists matrices $Y_1$, $L_2$, $F$, $P$ and scalars $\mu,\alpha$ which satisfy the conditions~\eqref{eq:thm1}. If $\rho$ is chosen to satisfy
	\begin{equation}\label{eq:thm2}
	\lambda_1\rho \ge \sup_{t\ge t_0}\left\|S\left((T_1 \bar A -L_1\bar C) \bar e + T_1G w_x +T_1 B_f \delta f\right)\right\|,
	\end{equation}
	then the BL-SMO~\eqref{eq:obs} with gains $L_1=P^{-1}Y_1$, $L_2$, $F$ and $\rho$ generates error trajectories $\bar e(t)$ that converge to $\mathcal{S}_\eta$ in finite time.
\end{theorem}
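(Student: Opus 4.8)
The plan is to treat $\sigma = S\bar e = F\bar C\bar e$ as a sliding variable and to establish a \emph{reaching condition} of the form $\tfrac{d}{dt}\|\sigma\| \le -\kappa$ for some constant $\kappa>0$ whenever the error lies outside $\mathcal S_\eta$ (that is, $\|\sigma\|\ge\eta$). Such an inequality forces $\|\sigma\|$ to decrease at least linearly and hence to reach the level $\eta$ in time at most $(\|\sigma(t_0)\|-\eta)/\kappa$, which is exactly finite-time entry into $\mathcal S_\eta$. This is the standard sliding-mode reaching argument adapted to the boundary layer, and the work is to identify the correct Lyapunov candidate and exploit the design identity~\eqref{eq:thm1_b}.

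First I would differentiate $\sigma$ along the error dynamics~\eqref{eq:err_dyn}, writing $f-\hat f=\delta f$ as in~\eqref{eq:f_minus_fhat}, to obtain
\[
\dot\sigma = S\big[(T_1\bar A - L_1\bar C)\bar e + T_1 G w_x + T_1 B_f\delta f\big] - S T_1 G\,\hat w_x^\eta .
\]
Outside the boundary layer $e_y=\bar C\bar e$ gives $\hat w_x^\eta = \rho\,F\bar C\bar e/\|F\bar C\bar e\| = \rho\,\sigma/\|\sigma\|$. The crucial structural step is~\eqref{eq:thm1_b}: since $S = F\bar C = G^\top T_1^\top P$, the injection coefficient collapses to $S T_1 G = G^\top T_1^\top P T_1 G$, a symmetric matrix whose smallest eigenvalue is exactly $\lambda_1$ from~\eqref{eq:lambda_1}. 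Denoting the bracketed nominal part by $\chi$, I would take $W=\tfrac12\sigma^\top\sigma$ and compute $\dot W = \sigma^\top\chi - \rho\,\sigma^\top (G^\top T_1^\top P T_1 G)\sigma/\|\sigma\|$. Bounding $\sigma^\top\chi\le\|\sigma\|\,\|\chi\|$ by Cauchy--Schwarz and $\sigma^\top (G^\top T_1^\top P T_1 G)\sigma\ge\lambda_1\|\sigma\|^2$ yields $\dot W \le \|\sigma\|\,(\|\chi\|-\rho\lambda_1)$, i.e.\ $\tfrac{d}{dt}\|\sigma\|\le \|\chi\|-\rho\lambda_1$. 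Condition~\eqref{eq:thm2} is precisely $\rho\lambda_1\ge \sup_{t\ge t_0}\|\chi\|$, which drives this quantity below zero and produces the reaching rate $\kappa$.

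Before invoking~\eqref{eq:thm2} I must check that $\sup_{t\ge t_0}\|\chi\|$ is finite, so that a $\rho$ meeting the bound exists; this is where Assumptions~\ref{asmp:plant_bdd} and~\ref{ass:uc_r_to_zero} and Lemma~\ref{lem:4} enter. Boundedness of $\bar e$ follows from the ultimate bound of Theorem~\ref{thm:obs_design} together with boundedness of $x$ and $w_y$ (Assumption~\ref{asmp:plant_bdd}); $w_x$ is bounded by hypothesis; and $\delta f = f(q_1)-f(q_2)$ with $\|q_1-q_2\|=\|(C_q\bar E - L_2\bar C)\bar e\|$ bounded, so uniform continuity (Assumption~\ref{ass:uc_r_to_zero}), quantified through the modulus of continuity $\gamma_f$ and Lemma~\ref{lem:4}, gives $\|\delta f\|\le \gamma_f(\|(C_q\bar E - L_2\bar C)\bar e\|)<\infty$. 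Hence $\|\chi\|$ is uniformly bounded in $t$ and the supremum in~\eqref{eq:thm2} is finite.

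The main obstacle is the gap between the non-strict inequality~\eqref{eq:thm2} and the strictly negative reaching condition that finite-time convergence genuinely requires: with $\rho\lambda_1=\sup_{t}\|\chi\|$ one obtains only $\dot W\le 0$, which gives the asymptotic approach already delivered by Theorem~\ref{thm:obs_design}, not finite time. I would close this by reading~\eqref{eq:thm2} as a strict design inequality (any feasible $\rho$ may be enlarged), which extracts a strictly positive margin $\kappa=\rho\lambda_1-\sup_t\|\chi\|>0$ and hence $\tfrac{d}{dt}\|\sigma\|\le-\kappa$ on $\{\|\sigma\|\ge\eta\}$, integrating to a finite reaching time. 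A minor technical point is that $\tfrac{d}{dt}\|\sigma\|$ is only smooth for $\sigma\neq0$, but this is harmless because the estimate is needed solely on the region $\|\sigma\|\ge\eta>0$, which stays away from the origin.
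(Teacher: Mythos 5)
Your proposal follows essentially the same route as the paper's proof: differentiate $\sigma = F\bar C\bar e$ along the error dynamics~\eqref{eq:err_dyn}, use the design identity~\eqref{eq:thm1_b} to collapse the injection coefficient $ST_1G$ into the symmetric positive definite matrix $G^\top T_1^\top P T_1 G$ with minimal eigenvalue $\lambda_1$, bound the nominal term by Cauchy--Schwarz, and verify finiteness of the supremum in~\eqref{eq:thm2} via boundedness of $\bar e$ and $w_x$ together with the modulus of continuity $\gamma_f$ and Lemma~\ref{lem:4}. Your explicit treatment of the strict-versus-non-strict inequality issue is in fact more careful than the paper's, which papers over it with the claim that ``for every $\zeta>0$, we can select $\rho$ large enough to ensure $\sigma^\top\dot\sigma\le-\zeta\|\sigma\|$'' --- a step that, exactly as you observe, implicitly requires enlarging $\rho$ strictly beyond the non-strict bound~\eqref{eq:thm2}.
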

\begin{proof}
	If $\|S\bar e\|<\eta$, we are done. Hence, for the remainder of this proof, we consider error trajectories satisfying $\|S\bar e\|\ge \eta$. It is enough to show that $$\sigma^\top \dot\sigma < -\zeta\|\sigma\|$$ for some $\zeta>0$ in order to prove finite-time convergence to $\mathcal{S}_\eta$, as argued in~\cite{Slotine1991}. To this end,
	\begin{align*}
	\sigma^\top \dot \sigma &= \sigma^\top S \dot{\bar e}\\
	&= \sigma^\top S\left((T_1 \bar A -L_1\bar C) \bar e + T_1 B_f \delta f + T_1G (w_x-\hat w^\eta_x)\right)\\
	&= \sigma^\top S\left((T_1 \bar A -L_1\bar C) \bar e + T_1 B_f \delta f + T_1G w_x\right)-\rho \bar e^\top S^\top S T_1 G \frac{S \bar e}{\|S\bar e\|}\\
	&=\sigma^\top S\left((T_1 \bar A -L_1\bar C) \bar e + T_1 B_f \delta f + T_1G w_x\right)-\rho \bar e^\top S^\top (T_1 G)^\top P (T_1 G) \frac{S \bar e}{\|S\bar e\|}
	\end{align*}
	from~\eqref{eq:thm1_b}. From~\cite{Zhu2014}, we know that for~\eqref{eq:thm1_b} to have a solution, $T_1 G$ must be full column rank. Hence 
	\begin{equation}
	\label{eq:inv_T1GS}
	(T_1 G)^\top P (T_1 G)\succ 0,
	\end{equation} 
	since $P\succ 0$.
	
	Recalling that $\lambda_1$ is the minimal eigenvalue of the symmetric positive definite matrix $(T_1 G)^\top P (T_1 G)$, we get
	\begin{align*}
	\sigma^\top \dot\sigma&\le \sigma^\top S\left((T_1 \bar A -L_1\bar C) \bar e + T_1 B_f \delta f + T_1G w_x\right)-\rho\lambda_1 \frac{\|S\bar e\|^2}{\|S\bar e\|}\\
	&\le \|\sigma\| \left(\left\|S\left((T_1 \bar A -L_1\bar C) \bar e + T_1 B_f \delta f + T_1G w_x\right)\right\|-\rho\lambda_1\right).
	\end{align*}
	We claim that for every $\zeta>0$, we can select $\rho$ large enough to ensure $\sigma^\top \dot \sigma \le -\zeta\|\sigma\|$. To prove our claim, we first demonstrate that
	\begin{align}\label{eq:pf3a}
	&\sup_{t\ge t_0}\left\|F\bar C\left((T_1 \bar A -L_1\bar C) \bar e(t) + T_1 B_f \delta f+ T_1G w_x(t)\right)\right\|<\infty.
	\end{align}
	Using the triangle inequality, we have
	\begin{align*}
	&\sup_{t\ge t_0}\left\|S\left((T_1 \bar A -L_1\bar C) \bar e(t)+ T_1G w_x(t)+ T_1 B_f \delta f\right)\right\|\\
	&\le \sup_{t\ge t_0}\left(\|S(T_1 \bar A -L_1\bar C)\|\|\bar e(t)\|+ \|ST_1G\| \|w_x(t)\|+ \|ST_1 B_f\|\|f(C_q\bar E x)-f(C_q\bar E\hat x - L_2(y-\bar C\hb x))\| \right)\\
	&\le \sup_{t\ge t_0}\|S(T_1 \bar A -L_1\bar C)\|\|\bar e(t)\|+ \|ST_1G\| \rho_w + \|ST_1 B_f\|\gamma_f(\|C_q\bar E-L_2\bar C\|\|\bar e(t)\|),
	\end{align*}
	since $w_x(t)$ and $\bar e(t)$ are bounded by Assumptions~\ref{asmp:faults_bdd} and~\ref{asmp:plant_bdd} and by~\eqref{thm1:E}. 
	From~\eqref{rk:exp_decay}, we also know that $\|\bar e(t)\|$ decays exponentially when $\|S\bar e\|\ge \eta$. This implies that $\|\bar e(t)\|$ is bounded and decreasing with increasing $t$. By Remark~\ref{rk:mod_of_cont_dec}, this implies that $\gamma_f(\|C_q\bar E - L_2\bar C\|\|\bar e(t)\|)$ also decreases with increasing $t$, since $\bar e(t_0)$ is bounded. Hence $\sup_{t\ge t_0} \gamma_f(\|C_q\bar E - L_2\bar C\|\|\bar e(t)\|)$ is bounded. As all the terms are bounded, thereby finite, the condition~\eqref{eq:pf3a} holds and the gain~$\rho$ selected using~\eqref{eq:thm2} is well-defined.
\end{proof}
\section{Recovering the state disturbance using smooth window functions}
\label{sec:lpf}
In this section, we discuss a filtering method to reconstruct the state disturbance $w_x(t)$. Specifically, we will show that any piecewise uniformly continuous state disturbance input can be reconstructed to prescribed accuracy by filtering the injection term~\eqref{eq:obs_d} of the BL-SMO with a smooth window function.
 
To begin, we present our notion of a smooth window function.
\begin{definition}[Smooth Window Function]
\label{def:swf}
	A \textbf{smooth window} function $h(t):\mathbb{R}\mapsto\mathbb{R}$ satisfies the following conditions:
	\begin{enumerate}[(i)]
		\item $h(t) \in\mathcal{C}^\infty$, that is, $h$ is smooth;
		\item $h(t)\ge 0$ for $t\in [-1, 1]$ and $h(t)= 0$ elsewhere;
		\item $\intinf h(t) = 1$.
	\end{enumerate}
\end{definition}
In the following proposition, we demonstrate that a uniformly continuous function can be approximated arbitrarily closely by filtering with smooth window functions.
\begin{prop}
	\label{lem:trig_poly}
	Let $h(t)$ be a smooth window function, and define
	\begin{equation}\label{eq:h_beta}
	h_\beta(t) = \frac{1}{\beta} h\left(\frac{t}{\beta}\right).
	\end{equation}
	Also, suppose $\tau_1,\tau_2\in\mathbb{R}$ and $\psi:(\tau_1,\tau_2)\to\mathbb{R}$ is uniformly continuous. Then for every $\varepsilon>0$, there exists a $\beta>0$ such that
	\begin{equation}\label{eq:acc_uio}
	\|\psi(t)-(h_\beta\ast \psi)(t)\|\le \varepsilon,
	\end{equation}
	for every $t\in [\tau_1+\beta,\tau_2-\beta]$. Here, `$\ast$' denotes the convolution operator.
\end{prop}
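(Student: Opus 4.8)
The plan is to treat $h_\beta$ as an approximate identity (a mollifier) and run the standard convolution-smoothing estimate. First I would record the two elementary properties of the rescaled window. Substituting $u=\tau/\beta$ gives $\intinf h_\beta(\tau)\,d\tau = \intinf h(u)\,du = 1$, so the rescaling preserves unit mass; and since $h$ vanishes outside $[-1,1]$, the function $h_\beta(\tau)=\tfrac1\beta h(\tau/\beta)$ is supported on $[-\beta,\beta]$ and remains nonnegative there by condition (ii) of Definition~\ref{def:swf}. These two facts are exactly what makes the argument work: unit mass lets me write the convolution as a weighted average of translates of $\psi$, and the support bound ensures that for $t\in[\tau_1+\beta,\tau_2-\beta]$ and $|\tau|\le\beta$ the point $t-\tau$ lies in $[\tau_1,\tau_2]$, where $\psi$ (being uniformly continuous) extends continuously to the closure, so every evaluation of $\psi$ below is legitimate.

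Next, using $\intinf h_\beta = 1$ I would rewrite $\psi(t)$ as $\intinf h_\beta(\tau)\psi(t)\,d\tau$ and subtract the convolution $(h_\beta\ast\psi)(t)=\intinf h_\beta(\tau)\psi(t-\tau)\,d\tau$ to obtain the single identity
\[
\psi(t)-(h_\beta\ast\psi)(t) = \intinf h_\beta(\tau)\bigl(\psi(t)-\psi(t-\tau)\bigr)\,d\tau.
\]
Taking norms and moving them inside the integral (valid because $h_\beta\ge 0$), and using that $h_\beta$ is supported on $[-\beta,\beta]$, reduces everything to bounding $\intinfb \! $-type mass against the local oscillation of $\psi$; concretely, to estimating $\int_{-\beta}^{\beta} h_\beta(\tau)\,\|\psi(t)-\psi(t-\tau)\|\,d\tau$.

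The decisive step is uniform continuity. For $|\tau|\le\beta$ the oscillation $\|\psi(t)-\psi(t-\tau)\|$ is at most the modulus of continuity of $\psi$ evaluated at $\beta$, and by the argument of Lemma~\ref{lem:4} this modulus tends to $0$ as $\beta\to 0$. Equivalently and more directly: given $\varepsilon>0$, uniform continuity supplies a single $\delta>0$, \emph{independent of $t$}, with $\|\psi(s)-\psi(s')\|\le\varepsilon$ whenever $|s-s'|\le\delta$; choosing $\beta\le\delta$ forces $|t-(t-\tau)|=|\tau|\le\beta\le\delta$ throughout the integral, so the integrand is dominated by $\varepsilon\,h_\beta(\tau)$. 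Integrating and invoking $\int_{-\beta}^{\beta}h_\beta = 1$ then yields $\|\psi(t)-(h_\beta\ast\psi)(t)\|\le\varepsilon$ uniformly for all $t\in[\tau_1+\beta,\tau_2-\beta]$, which is the claim.

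The main obstacle is one of care rather than depth. The crucial point is that the bound on $\|\psi(t)-\psi(t-\tau)\|$ must hold with a single threshold $\delta$ valid across the entire interval simultaneously; this is precisely why pointwise continuity would fail and the uniform-continuity hypothesis is essential, as it is what produces a $t$-independent $\delta$ (equivalently, a modulus of continuity shrinking to $0$). The second thing to keep straight is the support bookkeeping, so that $\psi(t-\tau)$ is only ever evaluated inside the domain; this is exactly what the restriction $t\in[\tau_1+\beta,\tau_2-\beta]$ guarantees, and it is why the approximation is claimed on a slightly shrunken interval rather than all of $(\tau_1,\tau_2)$.
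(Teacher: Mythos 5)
Your proof is correct and follows essentially the same route as the paper's: both exploit the unit mass and compact support of $h_\beta$ to write $\psi(t)-(h_\beta\ast\psi)(t)$ as $\int (\psi(t)-\psi(t-\tau))h_\beta(\tau)\,d\tau$, then bound the integrand by $\varepsilon\,h_\beta(\tau)$ via uniform continuity and integrate. Your additional care about the support bookkeeping (why $t-\tau$ stays in $[\tau_1,\tau_2]$, explaining the shrunken interval) is a nice touch that the paper leaves implicit, but it is a refinement, not a different argument.
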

\begin{proof}
	We begin by noting that conditions (i)--(iii) in Definition~\ref{def:swf} imply that the function $h_\beta$ is non-negative on the compact support $[-\beta, \beta]$, and,
	\begin{equation}\label{pf:prop_a}
	\intinf h_\beta(t) = 1.
	\end{equation}
	Using the definition of convolution, we have
	\[
	\psi(t) - h_\beta(t)\ast\psi(t) =\psi(t) - \intinf \psi(t-\tau)h_\beta(\tau)  \,d\tau.
	\]
	Applying~\eqref{pf:prop_a} to the above yields
	\[
	\psi(t) - h_\beta(t)\ast\psi(t) =\intinf (\psi(t) -  \psi(t-\tau))h_\beta(\tau) \,d\tau.
	\]
	Since by definition, $\psi$ is uniformly continuous on $[\tau_1, \tau_2]$, for every $\varepsilon>0$ there exists a $\beta>0$ such that for any $t_1, t_2\in[\tau_1,\tau_2]$ satisfying $|t_1-t_2|\le \beta$, we obtain $|\psi(t_1) - \psi(t_2)|\le \varepsilon$.
	
	Therefore, we get the estimate
	\begin{align*}
	|\psi(t) - h_\beta(t)\ast\psi(t)| &\le\intinf |(\psi(t) -  \psi(t-\tau))h_\beta(\tau)| \,d\tau\\
	&\le\intinf |\psi(t) -  \psi(t-\tau)||h_\beta(\tau)| \,d\tau\\
	&\le \varepsilon \intinf h_\beta(\tau)\,d\tau,
	\end{align*}
	since $h_\beta$ is non-negative. Applying~\eqref{pf:prop_a} on the above, we get~\eqref{eq:acc_uio}. This concludes the proof.
\end{proof}

We wish to extend the result in Proposition~\ref{lem:trig_poly} to a more general class of functions, namely piecewise uniformly continous function, defined next.
\begin{definition}[Piecewise uniformly continuous]
	\label{def:pw_cont}
	A signal $w_x$ is \textbf{piecewise} (or sectionally) \textbf{uniformly continuous} if
	\begin{enumerate}[(i)]
		\item the signal $w_x$ exhibits finite (in magnitude) jump discontinuities at abscissae of discontinuity denoted $$\mathcal T \triangleq \{t_i: i\in\mathbb{I}\},$$ where $\mathbb{I}=\{1, 2,\ldots\}$. Specifically, $\mathbb I$ is the set of integers $i$ satisfying $a < i < b$ for some $a < b$, where $a$ may be $-\infty$ and $b$ may be $\infty$, and $t_i < t_{i+1}$ whenever $i, i+1 \in \mathbb I$.
		\item there exists a scalar $c>0$ such that $|t_{i+1} - t_i|>c$
		for every $i \in\mathbb{I}$;
		\item the signal $w_x$ is uniformly continuous on the closure of each open interval in $\mathbb R \setminus \mathbb I$ and this uniformity is independent of the interval.  More formally, for every $\epsilon > 0$ there exists $\delta > 0$ such that if $\tau_1 < \tau_2$ are in $\mathbb R \setminus \mathbb I$ satisfying $|\tau_1-\tau_2|< \delta$ and such that there is no $i \in I$ with $\tau_1 < t_i < \tau_2$, then $\|w_x(\tau_1) - w_x(\tau_2)\| \leq \epsilon$.
	\end{enumerate}
\end{definition}
\begin{assumption}\label{ass:unif_cont_uio}
	The state disturbance input $w_x$ is piecewise uniformly continuous.
\end{assumption}
\begin{remark}
	Assumption~\ref{ass:unif_cont_uio} ensures that unknown input signals do not exhibit Zeno behavior (infinite number of jumps in finite time intervals), which is a reasonable assumption for state disturbances such as actuator faults or unmodeled inputs in physiological systems.\qed
	\end{remark}

We are now ready to extend Proposition~\ref{lem:trig_poly} to piecewise uniformly continuous $w_x$.

\begin{prop}\label{prop:filt_pw_cont}
	Let $\varphi:\mathbb{R}\mapsto \mathbb{R}$ satisfy Assumption~\ref{ass:unif_cont_uio} and $\varphi\in\mathcal L_\infty$. Let the function $h_\beta(t)$ be defined as in~\eqref{eq:h_beta} and let \begin{equation}\label{eq:I_beta}
	\mathcal I_\beta \triangleq \bigcup_{i\in\mathbb I} [t_i-\beta, t_i+\beta]
	\end{equation}
	denote the union of closed neighborhoods around each abscissa of discontinuity of $\varphi$.  Then for every $\varepsilon>0$, there exists a $0<\beta<c$ such that
	\begin{equation}\label{eq:prop1}
	\|\varphi(t) - (h_\beta\ast \varphi)(t)\| \le \begin{cases}
	\varepsilon, & \text{for } t\in \mathbb{R}\setminus\mathcal{I}_\beta\\
	2\|\varphi\|_\infty, & \text{for }t\in \mathcal I_\beta.
	\end{cases}
	\end{equation}
	As before, `$\ast$' denotes the convolution operator and $\|\cdot\|_\infty$ is the $\mathcal{L}_\infty$ norm.
\end{prop}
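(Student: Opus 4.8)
The plan is to split the analysis according to whether the evaluation point $t$ lies in the bad set $\mathcal{I}_\beta$ defined in~\eqref{eq:I_beta} or in its complement, exploiting that the mollifier $h_\beta$ has compact support $[-\beta,\beta]$, is non-negative, and integrates to one (facts already established at the start of the proof of Proposition~\ref{lem:trig_poly}). On $\mathcal{I}_\beta$ I would only claim the crude bound, while on the complement I would recover the sharp $\varepsilon$-estimate by the same mollification argument as in Proposition~\ref{lem:trig_poly}, once I verify that the convolution samples $\varphi$ entirely within a single continuous piece.

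For $t\in\mathcal{I}_\beta$ I would simply use the triangle inequality together with $\varphi\in\mathcal{L}_\infty$: writing $(h_\beta\ast\varphi)(t)=\intinf \varphi(t-\tau)h_\beta(\tau)\,d\tau$ and bounding $|(h_\beta\ast\varphi)(t)|\le \|\varphi\|_\infty\intinf h_\beta(\tau)\,d\tau=\|\varphi\|_\infty$, while $|\varphi(t)|\le\|\varphi\|_\infty$. This yields $\|\varphi(t)-(h_\beta\ast\varphi)(t)\|\le 2\|\varphi\|_\infty$, which is the second branch of~\eqref{eq:prop1} and requires no smallness of $\beta$.

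For $t\in\mathbb{R}\setminus\mathcal{I}_\beta$, the geometric key is that $t$ is at distance more than $\beta$ from every abscissa of discontinuity, so the sampling window $[t-\beta,t+\beta]$ contains no point of $\mathcal{T}$. Since $h_\beta$ is supported on $[-\beta,\beta]$, the convolution only reads $\varphi$ at points $t-\tau\in[t-\beta,t+\beta]$, all lying in one continuous piece with no discontinuity separating them from $t$. I would then invoke condition~(iii) of Definition~\ref{def:pw_cont}, which supplies an interval-independent $\delta>0$ such that $\|\varphi(\tau_1)-\varphi(\tau_2)\|\le\varepsilon$ whenever $|\tau_1-\tau_2|<\delta$ and no discontinuity lies between $\tau_1$ and $\tau_2$. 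Choosing $\beta<\min(\delta,c)$ then guarantees $\|\varphi(t)-\varphi(t-\tau)\|\le\varepsilon$ for every $\tau\in[-\beta,\beta]$. Substituting into
\[
\varphi(t)-(h_\beta\ast\varphi)(t)=\intinf\bigl(\varphi(t)-\varphi(t-\tau)\bigr)h_\beta(\tau)\,d\tau
\]
and using the non-negativity and unit mass of $h_\beta$ exactly as in Proposition~\ref{lem:trig_poly} gives the first branch $\|\varphi(t)-(h_\beta\ast\varphi)(t)\|\le\varepsilon$.

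The main obstacle is the correct deployment of the interval-\emph{independent} uniform continuity in condition~(iii): unlike Proposition~\ref{lem:trig_poly}, where $\varphi$ was uniformly continuous on a single interval, here one fixed $\delta$ must serve simultaneously across all the continuity pieces, and the mollification estimate is legitimate only because the condition $t\notin\mathcal{I}_\beta$ excludes any discontinuity from the window $[t-\beta,t+\beta]$. The requirement $\beta<c$ plays a minor but worthwhile role: it keeps $\beta$ small relative to the minimum spacing between discontinuities, so that the neighborhoods comprising $\mathcal{I}_\beta$ localize the bad set to one discontinuity each and the two branches of~\eqref{eq:prop1} partition $\mathbb{R}$ cleanly.
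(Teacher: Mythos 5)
Your proposal is correct and follows essentially the same route as the paper's proof: the crude $2\|\varphi\|_\infty$ bound on $\mathcal{I}_\beta$ via unit mass of $h_\beta$, and the $\varepsilon$-mollification estimate on the complement using the interval-independent uniform continuity of condition (iii) with $\beta<c$. The only cosmetic difference is that you inline the uniform-continuity argument directly, whereas the paper invokes Proposition~\ref{lem:trig_poly} on each continuity piece; the content is identical.
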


\begin{proof}
	We begin by noting that the existence of the constant $c$ in Definition~\ref{def:pw_cont} implies that $\mathcal T$ is a set of Lebesgue measure zero. Hence, the convolution integrals over $\mathbb{R}$ are well-defined.
	
	First, fix $\varepsilon>0$ and recall the definition of $h_\beta(t)$ in~\eqref{eq:h_beta}. Since the function $h_\beta$ has compact support $[-\beta, \beta]$, the convolution integral is evaluated over the window of length $2\beta$. Thus, we can directly apply Proposition~\ref{lem:trig_poly} with $\psi = \varphi|_{[t_i-\beta, t_{i+1}+\beta]}$ to obtain~\eqref{eq:acc_uio} for any interval in $\mathcal I_\beta$. By (iii) in Definition~\ref{def:pw_cont}, we can select $\beta$ independent of $i\in\mathbb{I}$. From this, we conclude
	\[
	\|\varphi(t) - (h_\beta\ast \varphi)(t)\| \le \varepsilon,
	\]
	for $t\in \mathbb{R}\setminus\mathcal{I}_\beta$.
	
	However, the same cannot be said for the points $t\in\mathcal{I}_\beta$ because the function $\varphi$ is not uniformly continuous across the point of discontinuity. Since $\beta<c$, we know that the function jumps just once in the interval $(t_i-\beta, t_i+\beta)$. Then we write
	\begin{align*}
	\|\varphi(t') - (h_\beta\ast \varphi)(t')\|&=\left\|\int_{-\beta}^{\beta} \left(\varphi(t')- \varphi(t'-\tau)\right)h_\beta(\tau)\,d\tau\right\|\\
	&\le \int_{-\beta}^{\beta} \left\|\varphi(t')- \varphi(t'-\tau)\right\|\|h_\beta(\tau)\|\,d\tau\\
	&\le 2\|\varphi\|_\infty.
	\end{align*}
	since $\int_{-\beta}^{\beta} h_\beta(\tau)\,d\tau = 1$ by construction. This concludes the proof.
\end{proof}
Proposition~\ref{prop:filt_pw_cont} implies that for piecewise uniformly continuous $w_x$, the filtering approach using smooth windows leads to high accuracy reconstructions in all but neighborhoods of the points of jump discontinuity.

The following theorem is the main result of this section. It is an extension of a low-pass filtering method proposed in~\cite{Hui2013}, which was for linear systems with sliding manifolds of co-dimension one. 

\begin{theorem}\label{thm:act_fault_unstable}
	Suppose Assumptions 1--\ref{ass:unif_cont_uio} hold, and there exists a feasible solution $(P, L_1, L_2, F, M, \alpha, \mu)$ satisfying the conditions~\eqref{eq:thm1} in Theorem~\ref{thm:obs_design}. Let $\rho$ be selected as in~\eqref{eq:thm2} and $\mathcal I_\beta$ be defined as in~\eqref{eq:I_beta}. 
	Then for a given $\varepsilon>0$, there exist scalars $\beta_1,\ldots,\beta_{n_w}>0$, a sufficiently large $T>0$, a sufficiently small $\eta>0$ and a low-pass filter 
	\begin{equation}\label{eq:filt}
	h_\beta(t) = \mathrm{diag}\;\begin{bmatrix}
	\frac{1}{\beta_1}h_{\beta_1}\left(\frac{t}{\beta_1}\right) & \cdots & \frac{1}{\beta_{n_w}}h_{\beta_{n_w}}\left(\frac{t}{\beta_{n_w}}\right)
	\end{bmatrix}
	\end{equation}
	such that
	\[
	\|w_x(t) - (h_\beta\ast \hat w^\eta_x)(t)\|\le \begin{cases}
	2\rho_x + \varepsilon/2, & \text{for }t\in \mathcal I_\beta\\
	\varepsilon, & \text{for } t\in [T, \infty)\setminus\mathcal{I}_\beta
	\end{cases}
	\]
	for all $t\ge T$.
\end{theorem}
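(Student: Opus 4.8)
The plan is to split the reconstruction error as
\[
w_x - h_\beta\ast\hat w^\eta_x = \bigl(w_x - h_\beta\ast w_x\bigr) + h_\beta\ast\bigl(w_x - \hat w^\eta_x\bigr),
\]
and to control the \emph{filtering error} $w_x - h_\beta\ast w_x$ and the \emph{sliding error} $h_\beta\ast(w_x-\hat w^\eta_x)$ separately by a nested choice of the window widths $\beta_i$, the smoothing coefficient $\eta$, and the start time $T$. The filtering error is handled immediately by Proposition~\ref{prop:filt_pw_cont}: since $w_x$ is piecewise uniformly continuous (Assumption~\ref{ass:unif_cont_uio}) and $\|w_x\|_\infty\le\rho_x$, choosing each $\beta_i<c$ small enough gives $\|w_x - h_\beta\ast w_x\|\le\varepsilon/2$ on $\mathbb R\setminus\mathcal I_\beta$ and the crude bound $2\|w_x\|_\infty\le 2\rho_x$ on $\mathcal I_\beta$. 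This is precisely what produces the two-regime structure of the claimed estimate.

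For the sliding error I would first invoke Theorem~\ref{thm2}: under the stated hypotheses $\bar e$ reaches $\mathcal S_\eta$ in finite time, and because the reaching condition $\sigma^\top\dot\sigma<-\zeta\|\sigma\|$ holds whenever $\|\sigma\|\ge\eta$, the layer $\mathcal S_\eta$ is positively invariant; hence beyond some finite reaching time $\|\sigma\|=\|F\bar C\bar e\|<\eta$ permanently. Taking $T$ to exceed that time by $\beta$ guarantees the entire convolution window $[t-\beta,t+\beta]$ lies in the sliding regime for every $t\ge T$. On this window I would use the exact identity coming from the error dynamics~\eqref{eq:err_dyn} together with the design constraint~\eqref{eq:thm1_b}: writing $\sigma=S\bar e$ with $S=F\bar C=G^\top T_1^\top P$ and noting $ST_1G=G^\top T_1^\top P T_1 G\succ0$ is invertible by~\eqref{eq:inv_T1GS}, the definition of $\hat w^\eta_x$ inside the layer yields
\[
\hat w^\eta_x - w_x = (ST_1G)^{-1}\bigl(-\dot\sigma + S(T_1\bar A - L_1\bar C)\bar e + ST_1B_f\,\delta f\bigr).
\]

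Convolving this identity with the normalized non-negative kernel $h_\beta$, I would bound the three contributions. The $\bar e$ and $\delta f$ terms are controlled by the ultimate bound $\|\bar e\|\le\sqrt{\mu\eta\rho_x/\alpha}$ from Theorem~\ref{thm:obs_design}, the modulus-of-continuity estimate $\|\delta f\|\le\gamma_f(\|C_q\bar E-L_2\bar C\|\,\|\bar e\|)$, and Lemma~\ref{lem:4}; both vanish as $\eta\to0$. The genuine obstacle is $h_\beta\ast\dot\sigma$, which need not be small pointwise because $\dot\sigma$ can be comparatively large inside the layer. Here I would shift the derivative onto the smooth, compactly supported kernel by integration by parts, $h_\beta\ast\dot\sigma = h_\beta'\ast\sigma$, and use $\|h_\beta'\|_1=\|h'\|_1/\beta$ with $\|\sigma\|<\eta$ to get $\|h_\beta\ast\dot\sigma\|\le(\|h'\|_1/\beta)\,\eta$. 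This is the crux: averaging annihilates the large instantaneous values of $\dot\sigma$ at the price of a factor $\eta/\beta$, so the ordering of choices is essential --- fix $\beta$ first from the filtering error, then choose $\eta$ small relative to $\beta$, then $T$.

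Combining, for $t\in[T,\infty)\setminus\mathcal I_\beta$ the filtering error gives at most $\varepsilon/2$ and the sliding error at most $\varepsilon/2$, for a total of $\varepsilon$; for $t\in\mathcal I_\beta$ the filtering error gives the crude $2\rho_x$ while the sliding error, being uniform in $t$ and independent of the jump location, still contributes at most $\varepsilon/2$, for a total of $2\rho_x+\varepsilon/2$. I expect the diagonal, component-dependent filter in~\eqref{eq:filt} to require only a routine componentwise repetition, selecting each $\beta_i$ small enough and taking the largest over $i$ when defining $\mathcal I_\beta$ and $T$. The step I would be most careful about is precisely the $\eta/\beta$ bound on $h_\beta\ast\dot\sigma$ and the attendant nested selection of constants, since this is where the boundary-layer analysis differs from ideal sliding and where a careless ordering of limits would break the estimate.
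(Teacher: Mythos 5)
Your proposal is correct and follows essentially the same route as the paper's proof: the same decomposition into a filtering error (controlled via Proposition~\ref{prop:filt_pw_cont}) and a sliding error (controlled by writing $w_x-\hat w^\eta_x$ through the error dynamics~\eqref{eq:err_dyn}, shifting the derivative onto the smooth kernel by integration by parts, bounding the $\bar e$ and $\delta f$ terms via the ultimate bound and Lemma~\ref{lem:4}, and inverting $ST_1G = G^\top T_1^\top P T_1 G \succ 0$), with the same nested ordering of $\beta$, then $\eta$, then $T$. Your explicit $\eta/\beta$ bound on $h_\beta'\ast\sigma$ and the remark on positive invariance of $\mathcal S_\eta$ are just slightly more explicit renderings of steps the paper carries out with the constants $\chi_1,\varepsilon_1$ and the phrase ``$\|S\bar e\|\le\eta$ for $t>t_S$.''
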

\begin{proof}
	We begin by fixing $\varepsilon>0$ and choosing $\beta_k>0$ for $1\le k\le n_w$ such that $h_{\beta_j}(t)$ satisfies~\eqref{eq:prop1} for the $j$th component of $w_x$. We define
	\begin{equation}\label{eq:pf2d}
	\mathcal H_\beta \triangleq \mathrm{diag}\;\begin{bmatrix}
	\frac{1}{\beta_1}\left\|\frac{dh_{\beta_1}}{dt}\right\|_1 & \cdots & \frac{1}{\beta_{n_w}}\left\|\frac{dh_{\beta_{n_w}}}{dt}\right\|_1
	\end{bmatrix},
	\end{equation}
	where $\|\cdot\|_1$ denotes the $\mathcal L_1$ norm.
	
	Let $\chi_1 = \|\mathcal H_\beta\|$, $\chi_2 = \|F\bar C(T_1 \bar A-L_1 C)\|$, and $\chi_3 = \|F\bar CT_1 B_f\|$, with $\|\cdot\|$ denoting the operator norm. We note that for a given $\beta_k$'s and $\varepsilon$, we can choose $\varepsilon_1$ sufficiently small to ensure that
	\begin{equation}\label{eq:chi}
	\frac{\max\{\chi_1 \varepsilon_1, \chi_2 \varepsilon_1, \chi_3 \gamma_f(\|C_q\bar E-L_2\bar C\|\varepsilon_1)\}}{\lambda_1} \le \frac{\varepsilon}{6},
	\end{equation}
	where $
	\lambda_1 = \lambda_{\min}(G^\top T_1^\top P T_1 G)$
	as defined in~\eqref{eq:lambda_1}. Recall that $\lambda_1>0$, as $G^\top T_1^\top P T_1 G\succ 0$. Note that by construction $F\bar C = G^\top T_1^\top P$, which implies
	\begin{equation}\label{eq:lam_1_later}
	\lambda_1 = \lambda_{\min}(F\bar CT_1 G).
	\end{equation}
	Let $S=F\bar C$, and $t_S$ be the time at which the error trajectories enter the boundary layer sliding manifold. We know that $t_S<\infty$ as $\rho$ satisfies~\eqref{eq:thm2} in Theorem~\ref{thm2}. Furthermore, we know that $\bar e(\cdot)$ is an absolutely continuous function (see Remark~\ref{rk:abs_cont_e}). 
	
	Therefore, we can apply integration by parts for $t>t_S$ and use the compact support and smoothness of $h_\beta$ to obtain
	\[
	\intinf h_\beta(t-\tau)S\dot{\bar{e}}(\tau)\, d\tau = \intinf \dot h_\beta(t-\tau)S\bar e(\tau)\, d\tau,
	\]
	which implies
	\[
	\intinfb h_\beta(t-\tau)S\dot{\bar{e}}(\tau)\, d\tau = \intinfb \dot h_\beta(t-\tau)S\bar e(\tau)\, d\tau.
	\]
	Let $\delta f$ be defined as in~\eqref{eq:f_minus_fhat}. Replacing the error-derivative $\dot{\bar e}$ using~\eqref{eq:err_dyn} gives
	\begin{align}
	\nonumber& \intinfb \dot h_\beta(t-\tau)S\bar e(\tau)\, d\tau =\\
	\nonumber& \quad \intinfb h_\beta(t-\tau) S(T_1 \bar A -L_1\bar C)\bar e(\tau)\,d\tau \\
	\nonumber&\quad + \intinfb h_\beta(t-\tau)ST_1 B_f\delta f\,d\tau\\
	&\quad + \intinfb h_\beta(t-\tau) ST_1 G \left(w_x(\tau)-\hat w^\eta_x(\tau)\right)\, d\tau.\label{eq:pf2e}
	\end{align}
	
	We now rewrite the last term in~\eqref{eq:pf2e} as
	\begin{align}
	\nonumber& (h_\beta\ast ST_1 Gw_x)(t)-(h_\beta\ast ST_1 G\hat w^\eta_x)(t) \\
	\nonumber& \quad =\intinfb \dot h_\beta(t-\tau)S\bar e(\tau)\, d\tau - \intinfb h_\beta(t-\tau)S T_1 B_f\delta f\,d\tau\\
	&\quad\quad - \intinfb h_\beta(t-\tau) S(T_1 \bar A -L_1\bar C)\bar e(\tau)\,d\tau.\label{eq:pf2f}
	\end{align}
	From Theorem~\ref{thm:obs_design}, we know that 
	$$
	\limsup_{t\to \infty}\|\bar e(t)\|\le \sqrt{\mu\eta\rho_x/\alpha},
	$$ and from Theorem~\ref{thm2} we get $\|S\bar e\|\le \eta$ for $t>t_S$. Thus, for a given $\beta>0$ and $\varepsilon_1>0$ chosen as in~\eqref{eq:chi}, there exists a sufficiently small $\eta>0$, and sufficiently large $T>t_S$ for which
	\begin{equation}\label{eq:e_sup}
	\sup_{\tau\in[t-\beta, t+\beta]}\|S\bar e(\tau)\| < \varepsilon_1
	\quad \text{and} \quad
	\sup_{\tau\in[t-\beta, t+\beta]}\|\bar e(\tau)\| < \varepsilon_1
	\end{equation}
	for all $t\ge T$. 
	
	The inequality~\eqref{eq:e_sup} along with Lemma~\ref{lem:4} implies
	\begin{align*}
	&\sup_{\tau\in[t-\beta, t+\beta]}\|\delta f\| \\
	&\quad= \sup_{\tau\in[t-\beta, t+\beta]}\|f(C_q\bar E\bar x(\tau))-f(C_q\bar E\hb x(\tau)+L_2\bar C\bar e(\tau))\| \\
	&\quad \le \sup_{\tau\in[t-\beta, t+\beta]}\|\gamma_f\left(C_q\bar E\bar e(\tau)-L_2\bar C\bar e(\tau)\right)\|\\
	&\quad\le \gamma_f\left(\|C_q\bar E-L_2\bar C\|\varepsilon_1\right).
	\end{align*}
	
	Therefore, using~\eqref{pf:prop_a} and~\eqref{eq:pf2d}, we upper bound the right hand side terms in~\eqref{eq:pf2f}. That is,
	\begin{subequations}
		\label{eq:pf2_g}
		\begin{align}
		\nonumber\left\|\intinf \dot h_\beta(t-\tau)S\bar e(\tau)\, d\tau\right\| &\le \|\mathcal H_\beta\| \sup_{\tau\in[t-\beta, t+\beta]}\|S\bar e(\tau)\|\\
		&\le \chi_1 \varepsilon_1,\\
		\nonumber\left\|\intinf h_\beta(t-\tau)ST_1 B_f\left(f(q(\tau))- f(\hat q(\tau))\right)\,d\tau\right\|&\le \sup_{\tau\in[t-\beta, t+\beta]}\|ST_1 B_f\|\|f(q(\tau))-f(\hat q(\tau))\|\\
		&\le \chi_3 \gamma_f(\|C_q\bar E - L_2\bar C\|\varepsilon_1),\\
		\nonumber\left\|\intinf h_\beta(t-\tau) S(T_1 \bar A -L_1\bar C)\bar e(\tau)\,d\tau\right\|&\le \|S(T_1 \bar A -L_1\bar C)\|\sup_{\tau\in[t-\beta, t+\beta]}\|\bar e(\tau)\|\\
		&\le\chi_2 \varepsilon_1,
		\end{align}
	\end{subequations}
	for $t\ge T$.
	We know that $ST_1G$ is symmetric positive definite, and hence, $\lambda_1>0$. Therefore,
	\begin{equation}\label{eq:pf2_h}
	\|w_x(t)-\hat w^\eta_x(t)\| \le \frac{\|ST_1 G(w_x(t)-\hat w^\eta_x(t))\|}{\lambda_1}.
	\end{equation}
	Applying to the above~\eqref{eq:pf2f} and~\eqref{eq:pf2_g} produces
	\begin{align*}
	\|(h_\beta\ast w_x)(t) - (h_\beta\ast\hat w^\eta_x)(t)\|&\le h_\beta(t) \ast \| w_x(t) - \hat w^\eta_x(t)\| \\
	&\le \frac{h_\beta(t) \ast \| S T_1 G (w_x(t) - \hat w^\eta_x(t))\|}{{\lambda_1}}\\
	&\le \frac{(\chi_1+\chi_2)\varepsilon_1 + \chi_3\gamma_f(\|C_q\bar E-L_2\bar C\|\varepsilon_1)}{{\lambda_1}}.
	\end{align*}
	By construction of $\varepsilon_1$ in~\eqref{eq:chi}, we get
	\begin{equation}\label{pf:bd2}
	\|(h_\beta\ast w_x)(t) - (h_\beta\ast \hat w_x^\eta)(t)\| \le \varepsilon/2.
	\end{equation}
	
	Recall the definition of $\mathcal I_\beta$ from~\eqref{eq:I_beta}. We now use~\eqref{eq:prop1} in Proposition~\ref{prop:filt_pw_cont} and~\eqref{pf:bd2} to obtain
	\begin{align*}
	&\left\|w_x(t) - (h_\beta\ast \hat w^\eta_x)(t)\right\|\\
	&= \|w_x(t) - (h_\beta\ast w_x)(t) + (h_\beta\ast w_x)(t) -(h_\beta\ast \hat w^\eta_x)(t)\|\\
	&\le \|w_x(t) - (h_\beta\ast w_x)(t)\| + \|(h_\beta\ast w_x)(t) -(h_\beta\ast \hat w^\eta_x)(t)\|\\
	&\le \begin{cases}
	2\rho_x + \varepsilon/2, & \text{for }t\in \mathcal I_\beta\\
	\varepsilon, & \text{for } t\in [T, \infty)\setminus\mathcal{I}_\beta
	\end{cases}
	\end{align*}
	for $t\ge T>t_S$ and $\eta$ sufficiently small. This concludes the proof.
\end{proof}
Theorem~\ref{thm:act_fault_unstable} implies the existence a bank of smooth window filters capable of reconstructing the vector-valued signal $w_x$ up to arbitrary accuracy in all but neighborhoods of jump discontinuities.

\section{Simulation Results}\label{sec:ex}
In this section, the performance of the proposed observer-based state and exogenous input estimation formalism is tested on two numerical examples. The first example is a practical system, where the nonlinearity is globally Lipschitz continuous and there is one state disturbance and one output disturbance signal. The second example is a randomly generated system (to demonstrate the non-conservativeness of our approach) with multiple exogenous inputs and a non-Lipschitz nonlinearity.
\subsection{Example 1}
We use the single joint flexible robot described in~\cite{Zhu2014} to test our observer design methodology. The nonlinear plant is modeled as in~\eqref{eq:sys_nom} with system matrices
\[
A = \begin{bmatrix}
0 & 1 & 0 & 0\\
-3.75 & -0.0015 & 3.75 & 0\\
0 & 0 & 0 & 1\\
3.75 & 0 & -3.75 & -0.0013
\end{bmatrix}\]
\begin{equation*} B_f = B_g = \begin{bmatrix}
0 \\ -1.1104 \\ 0 \\ 1
\end{bmatrix}, \;\; G = B_u = \begin{bmatrix}
1 \\ 0.5 \\ 0 \\ 1.3
\end{bmatrix}, \;\; C = \begin{bmatrix}
1 & 0 & 0 & 0\\
0 & 0 & 1 & 0\\
0 & 0 & 0 & 1
\end{bmatrix}, \;\; D = \begin{bmatrix}
0 \\ 1 \\ -2
\end{bmatrix}.
\end{equation*}
\begin{figure}[!ht]
	\centering
	\includegraphics[width=\columnwidth]{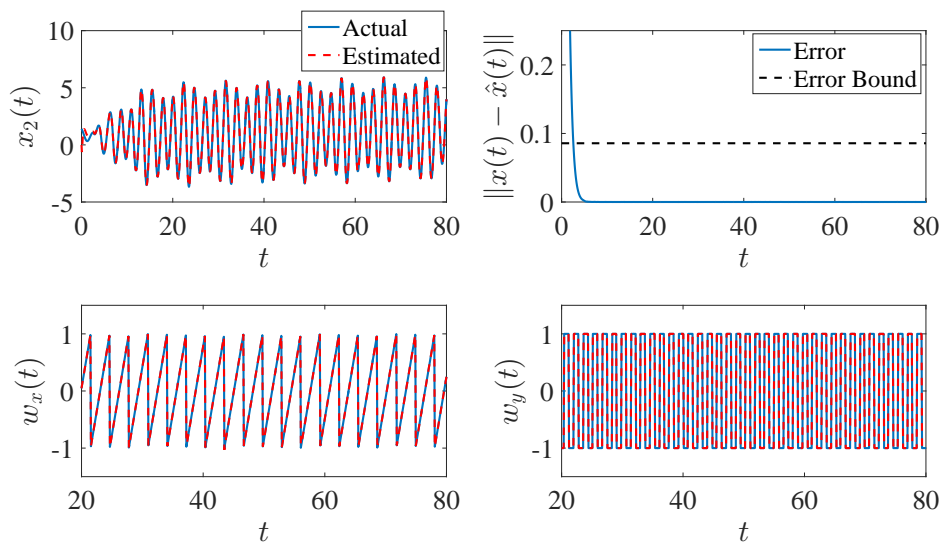}
	\caption{Simulation Results. (Top left) The unmeasured variable $x_2(t)$ is shown in blue, and the dashed red line is the estimated trajectory $\hat x_2(t)$. We note that the estimate is satisfactorily close to the actual. (Top right) The error $e(t)$ is plotted in blue with the dashed black lines showing the error bound computed to be 0.082. (Bottom Left) The estimate of the state disturbance input $w_x(t)$ shown after 40~s. Note that the low pass filtered estimate is highly accurate. }
	\label{fig:ex1}
\end{figure}
Here, the nonlinearity $f = \cos(x_2)$ is globally Lipschitz and its argument, $x_2$, is not measured directly. The function $g = 2.3\sin(x_1)$ is known at all $t\ge t_0$ because $x_1$ is a measured output. The control input is set to zero.
Hence $n_x = 4$, $n_y = 3$, $m_x=1$, $m_y = 1$ and $n_f=1$ and $C_q = \begin{bmatrix}
0 & 1 & 0 & 0
\end{bmatrix}$. Thus $f(q) = \cos(q)$ with $q=C_q x=x_2$. From~\eqref{eq:imm2}, we deduce that this nonlinearity has an incremental multiplier matrix $$M = \zeta \begin{bmatrix}
1 & 0\\0 & -1
\end{bmatrix},$$ where $\zeta> 0$.

We select $\alpha = 0.5$ and $L_2 = \begin{bmatrix}
-16.55 & -90.07 & 80.54 
\end{bmatrix}$. Using CVX~\cite{cvx}, we obtain a feasible solution\footnote{We find that minimizing the norm of $Y_1$, hence $L_1$, usually enables faster runtimes using MATLAB's \texttt{ode15s} or \texttt{ode23s}.} to the LMIs in~\eqref{eq:thm1}, namely $\zeta = 0.09$ and $\mu=18.5268$, the matrix
\[
P = \begin{bmatrix}
25.55 & -6.01 & -10.51 & -8.53 & -2.32 \\ 
-6.01 & 1.61 & 2.76 & 2.26 & 0.52 \\ 
-10.51 & 2.76 & 15.74 & 7.67 & -1.54 \\ 
-8.53 & 2.26 & 7.67 & 5.19 & -1.43 \\ 
-2.32 & 0.52 & -1.54 & -1.43 & 3.81 \\ 
\end{bmatrix},
\]
the observer gain
\[
L_1 = \begin{bmatrix}
1.58 & -0.43 & -0.21 \\ 
29.81 & -75.99 & -37.99 \\ 
5.33 & -14.25 & -7.12 \\ 
-20.95 & 62.13 & 31.06 \\ 
-9.29 & 27.93 & 13.97 \\ 
\end{bmatrix},
\]
and the sliding surface matrix
\[
F = \begin{bmatrix}
1.32 & 0.21 & 0.2 
\end{bmatrix}.
\]
For simulation purposes, we consider a randomly generated initial condition $$x(t_0) = \begin{bmatrix}
2.09 & -2.17 & -0.31 & -8.58 
\end{bmatrix}^\top$$ and the exogenous inputs are chosen to be $w_x = \,\mathrm{sawtooth}(2t+1)$ and $w_y = \,\mathrm{square}(4t)$. Hence, $\rho_x=1$. The observer is initialized at $z = 0$ and the sliding mode gain is set at $\rho=100$. Finally, the boundary layer sliding mode injection term $\hat w^\eta_x$ is computed using $\eta = 10^{-4}$. From Theorem~\ref{thm:obs_design}, we get the error state bound
\[
\limsup_{t\to\infty} \|\bar e(t)\|\le \sqrt{\frac{\mu\eta\rho_x}{\alpha}} \approx 0.073.
\] A 9th-order Butterworth low-pass filter with window length $\beta=0.24$~s is used to obtain the actuator fault signal estimate. The corresponding MATLAB implementation is \verb|butter(9,0.12,'low')|. The simulation results are shown in Figure~\ref{fig:ex1}. We compute the experimental mean squared error $\|w_x(t)-\hat w_x^\eta(t)\|^2 \approx 1.59\times 10^{-5}$ from $t\in [20, 80]$, which verifies that our reconstruction is highly accurate.
\subsection{Example 2}
To demonstrate that our assumptions are not restrictive, we will test our method on a randomly generated system of the form~\eqref{eq:sys_nom} with multiple unknown inputs and a non-Lipschitz nonlinearity. Here,
\[
A = \begin{bmatrix}
2.44 & 5.32 & 9.29 & 8.63 \\ 
1.1 & -4.11 & 1.82 & 2.53 \\ 
-0.09 & 0.9 & -2.91 & 0.06 \\ 
-4.53 & -3.45 & -8.59 & -12.14 \\ 
\end{bmatrix},
\]
$B_g = 0$, $B_u = 0$,
\[
G = \begin{bmatrix}
0.04 & 1.77 \\ 
1.37 & 0.3 \\ 
-6.14 & -0.56 \\ 
-2.71 & 0.05 \\ 
\end{bmatrix}, \;\; B_f = \begin{bmatrix}
0 \\ 
-1 \\ 
0 \\ 
1 \\ 
\end{bmatrix}, \;\;
C = \begin{bmatrix}
1 & 0 & 0 & 0\\
0 & 0 & 1 & 0\\
0 & 0 & 0 & 1
\end{bmatrix}, \;\; D = \begin{bmatrix}
1 \\ 0 \\ -1
\end{bmatrix},
\]
and the nonlinearity is $f = x_2|x_2|$. We set $C_q = \begin{bmatrix}
0 & 2 & 0 & 0
\end{bmatrix}$. Since the nonlinearity is globally Lipschitz, we know from~\eqref{eq:imm1} that it is characterized by an incremental multiplier matrix of the form 
\[
M = \zeta\begin{bmatrix}
0 & 1 \\ 1 & 0
\end{bmatrix}
\]
for some $\zeta>0$. We fix $\alpha = 0.5$, $L_2 = \begin{bmatrix}
-0.04 & -0.23 & 1.42 
\end{bmatrix}$ and use CVX to obtain $\zeta = 12.59$, $\mu = 4.71$,
\[
P = \begin{bmatrix}
8.45 & -1.41 & -1.12 & 4.58 & 5.99 \\ 
-1.41 & 8.75 & 4.35 & -1.2 & 0.07 \\ 
-1.12 & 4.35 & 20.8 & -3.93 & -1.38 \\ 
4.58 & -1.2 & -3.93 & 13.95 & -3.4 \\ 
5.99 & 0.07 & -1.38 & -3.4 & 11.26 \\ 
\end{bmatrix},\]
\[
L_1 = \begin{bmatrix}
43.58 & -0.26 & 43.58 \\ 
14.94 & -2.12 & 14.94 \\ 
-8.19 & 1.77 & -8.19 \\ 
-22.09 & -3.2 & -22.09 \\ 
-32.1 & 1.18 & -32.1 \\ 
\end{bmatrix}
\]
and
\[
F = \begin{bmatrix}
-10.13 & -48.81 & -10.16 \\ 
3.98 & -3.22 & 4.01 \\ 
\end{bmatrix}.
\]
\begin{figure*}[!ht]
	\centering
	\includegraphics[width=\columnwidth]{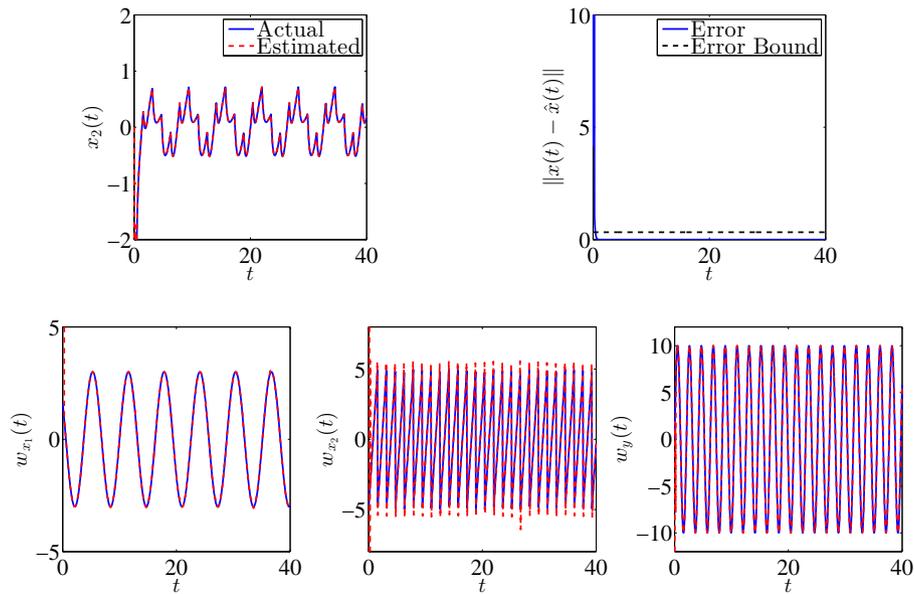}
	\caption{Simulation Results. (Top left) The actual (blue) and estimated (red dashed) trajectories of the unmeasured state $x_2(t)$ are shown. (Top right) Zoom-in of error trajectory and the computed plant state error bound. We note that the bound (black dashed) is not conservative. (Bottom) We illustrate that the exogenous inputs are estimated with high accuracy.}
	\label{fig:ex2}
\end{figure*}
We generate a random initial condition 
\[
x(t_0) =\begin{bmatrix}
-32.94 & 
-31.38 & 
-26.19 & 
-68.89  
\end{bmatrix}^\top
\] and the exogenous inputs are chosen to be 
\[w_x = \begin{bmatrix}
3\cos(t) \\ 5~\mathrm{sawtooth}(4t)
\end{bmatrix}
\] and $w_y = 10\sin(3t),$ respectively. Hence, $\rho_x\approx 5.831$ and $\|w_y(\cdot)\|_\infty\le 10$; therefore, these exogenous inputs are of significant magnitude. The observer is initialized at $z = 0$ and the sliding mode gain is chosen $\rho=200$. The continuous injection term $\hat w^\eta_x$ is computed with $\eta = 10^{-4}$.
Therefore, from Theorem~\ref{thm:obs_design}, we get the error state bound
\[
\limsup_{t\to\infty} \|\bar e(t)\|\le \sqrt{\frac{\mu\eta\rho_x}{\alpha}} \approx 0.1048.
\]
Two different smoothing filters with $\beta_1 = 0.3$~s, $\beta_2 = 0.1$~s are used to obtain estimates of the unknown state input $w_x(t)$. The MATLAB command to implement these smooth window filters is:~\texttt{smooth(injectionTerm, 'loess');}. The simulation results are shown in Figure~\ref{fig:ex2}. We note that although the unknown exogenous inputs are reconstructed with high accuracy, the sawtooth input $w_{x_2}$ exhibits overshoots and undershoots at the points of jump discontinuities, as predicted by Theorem~\ref{thm:act_fault_unstable}.

\section{Conclusions}\label{sec:conc}
We developed a methodology for constructing implementable boundary layer sliding mode observers for a wide class of nonlinear systems using incremental multiplier matrix as a unifying tool for the design procedure. We formulate linear matrix inequalities which, if satisfied, can be used to construct the observer with pre-specified ultimate bounds on the reconstruction error of plant states and unknown output disturbances. We also demonstrate the utility of smooth window functions in recovering the unknown state disturbance signal and provide an upper bound on the exogenous input estimation error for state disturbance inputs exhibiting jump discontinuities, which has not been investigated previously. The proposed methodology has a variety of applications including fault detection and reconstruction for mechanical systems, high confidence estimation in cyberphysical systems and secure communication.

\acks
The authors would like to thank Professor Martin J. Corless of the School of Aeronautics and Astronautics, Purdue University, West Lafayette, for his useful comments and suggestions.
This research was supported by a National Science Foundation (NSF) grant DMS-0900277.
\vspace{1em}
\bibliographystyle{unsrt}
\bibliography{refs}
\appendix
\section{Incremental Multiplier Matrices for Common Nonlinearities}\label{app}
We present systematic methods for the computation of incremental multiplier matrices for a variety of nonlinearities analyzed in this paper and encountered in practical systems. We refer the reader to~\cite[Section 6]{iqs_corless} for a detailed discussion of methods used to compute incremental multiplier matrices and corresponding derivations of these matrices.

We begin by recalling the definition of $\delta q$ and $\delta f$ given in~\eqref{eq:del_qp}.
\subsection{Incrementally sector bounded nonlinearities}
An incrementally sector bounded nonlinearity satisfies the inequality
\begin{equation}
( M_{11}\delta  q +  M_{12}\delta  f)^\top  X ( M_{21}\delta  q +  M_{22}\delta  f) \ge  0,
\label{eq:inc_sb_nonlin}
\end{equation}
for some fixed matrices $ M_{11},  M_{12},  M_{21}, M_{22}$ and for all $ X\in\mathcal{X}$, where $\mathcal{X}$ is a set of matrices.
After representing the nonlinearity in the form~\eqref{eq:inc_sb_nonlin}, the incremental quadratic constraint (IQC) in~\eqref{eq:iqc} is satisfied by choosing
\[
M = \begin{bmatrix}  M_a &  M_b \\  M_b^\top &  M_c
\end{bmatrix},
\]
where,
\begin{align*}
M_a &=  M_{11}^\top  X  M_{21} +  M_{21}^\top  X  M_{11},\\
M_b &=  M_{11}^\top  X  M_{22} +  M_{21}^\top  X^\top  M_{12},\\
M_c &=  M_{12}^\top  X  M_{22} +  M_{22}^\top  X^\top  M_{12}.
\end{align*}

\subsection{Incrementally positively real nonlinearities}
For a class of incrementally positively real nonlinearities, that is, nonlinearities satisfying $$\delta f^\top X \delta q \ge 0,$$ the corresponding incremental multiplier matrix is given by
\begin{equation*}
M = \kappa \begin{bmatrix}
0 & X^\top \\ X & 0
\end{bmatrix},
\end{equation*}
with $\kappa>0$.

\subsection{Globally Lipschitz nonlinearities}
For a globally Lipschitz nonlinearity that satisfies $\|\delta  f\|\leq L_f \|\delta  q\|$ for some $L_f > 0$, we write \[
(L_f \delta  q+\delta  f)^\top (L_f\delta  q-\delta  f) \geq  0
\]
and inequality~\eqref{eq:inc_sb_nonlin} is satisfied by choosing
\[
M = \kappa\begin{bmatrix}
L_f^2 I & 0 \\ 0 & -I
\end{bmatrix}
\]
with $\kappa >0$.
\subsection{Quasi-Lipschitz nonlinearities}
Another class of nonlinearities considered in this paper is the so-called `one-sided' or `quasi' Lipschitz nonlinearities that satisfy
\[
\delta q^\top Q \delta f \le \mathfrak L_f \delta q^\top R\delta q,
\]
for some $\mathfrak L_f\in\mathbb{R}$, $Q\in\mathbb{R}^{n_q\times n_f}$ and $R=R^\top\in\mathbb{R}^{n_q\times n_q}$. An incremental multiplier matrix for this class of nonlinearities is given by
\[
M = \kappa\begin{bmatrix}
2 \mathfrak L_f R & -Q \\ -Q^\top & 0
\end{bmatrix},
\]
with $\kappa >0$.
\subsection{Nonlinearities with derivatives residing in a polytope}
Suppose we have a nonlinearity $f$ that satisfies
\[
\frac{\partial f}{\partial q} \in \Theta,
\]
where $\Theta$ is a polytope with vertex matrices $\theta_1,\ldots,\theta_r$. In other words, 
\[
\frac{\partial f}{\partial q} = \theta(\chi),
\]
where $\theta(\chi) = \sum_{k=1}^r \chi_k \theta_k$, and $\chi_k$ satisfies $\chi_k\ge 0$ for all $k\in\{1,\ldots, r\}$ and $\sum_{k=1}^r \chi_k = 1$. Then a corresponding incremental multiplier matrix
\begin{equation}\label{eq:imm_con}
M = \begin{bmatrix}
M_{11} & M_{12} \\ M_{12}^\top & M_{22}
\end{bmatrix}
\end{equation}
satisfies the matrix inequalities
\begin{align*}
M_{22} &\preceq 0\\
M_{11} + M_{12}\theta_k + \theta_k^\top M_{12}^\top + \theta_k^\top M_{22} \theta_k &\succeq 0
\end{align*}
for all $k=1,\ldots, r$. An example of this class of nonlinearity is $f(q) = \begin{bmatrix}
\sin(q_1) & \cos(q_2)
\end{bmatrix}$, whose derivative is $\begin{bmatrix}
\cos(q_1) & 0 \\ 0 & -\sin(q_1)
\end{bmatrix}$ which lies in a polytope $\Theta$ with vertices
\[
\theta_1 = -\theta_2 = \begin{bmatrix}
1 & 0 \\ 0 & 0
\end{bmatrix}, \text{ and } \theta_3 = -\theta_4 = \begin{bmatrix}
0 & 0 \\ 0 & 1
\end{bmatrix}.
\]
Another example that falls into this category is the Takagi-Sugeno fuzzy model, proposed in~\cite{Sugeno1985}.
\subsection{Nonlinearities with derivatives residing in a cone}
Suppose we have a nonlinearity $f$ that satisfies
\[
\frac{\partial f}{\partial q} \in \Omega,
\]
where $\Omega$ is a cone with vertex matrices $\omega_1,\ldots,\omega_r$. In other words, 
\[
\frac{\partial f}{\partial q} = \omega(\chi),
\]
where $\omega(\chi) = \sum_{k=1}^r \chi_k \omega_k$, and $\chi_k$ satisfies $\chi_k\ge 0$ for all $k\in\{1,\ldots, r\}$. Then a corresponding incremental multiplier matrix of the form~\eqref{eq:imm_con} satisfies the matrix inequalities
\begin{align*}
M_{22}\theta_k &= 0\\
M_{12}\theta_k + \theta_k^\top M_{12}^\top &\succeq 0
\end{align*}
for all $k=1,\ldots, r$. An example of this class of nonlinearity is $f(q) = \begin{bmatrix}
q_1 & q_2^5/5
\end{bmatrix}$, whose derivative is $\begin{bmatrix}
1 & 0 \\ 0 & q_2^4
\end{bmatrix}$ which lies in a cone $\Omega$ with vertices
\[
\omega_1 = \begin{bmatrix}
1 & 0 \\ 0 & 0
\end{bmatrix} \text{ and }
\omega_2 = \begin{bmatrix}
1 & 0 \\ 0 & 1
\end{bmatrix}.
\]
\end{document}